\newtheorem{theorem}{Theorem}[section]
\newtheorem{example}[theorem]{Example}
\newtheorem{Them}{Theorem}[section]
\newtheorem{Prop}{Proposition}[section]
\newtheorem{Remark}{Remark}[section]
\newtheorem{Lemma}{Lemma}[section]
\newtheorem{CR}{Corollary}[section]
\begin{document}
\baselineskip 17pt
\makeatletter
\def\@setauthors{%
\begingroup
\def\thanks{\protect\thanks@warning}%
\trivlist \centering\footnotesize \@topsep30\p@\relax
\advance\@topsep by -\baselineskip
\item\relax
\author@andify\authors
\def\\{\protect\linebreak}%
{\authors}%
\ifx\@empty\contribs \else ,\penalty-3 \space \@setcontribs
\@closetoccontribs \fi
\endtrivlist
\endgroup } \makeatother
\title[{{ Stock loan model with Automatic termination clause }}]
 {{Stock loan with Automatic termination clause, cap and margin }}
\author[{ Shuqing Jiang,  Zongxia Liang and  Weiming Wu } ]
{Shuqing Jiang \\ Department of Mathematical Sciences, Tsinghua
University, Beijing 100084, China. Email:  jiangsq06@gmail.com\\
Zongxia Liang \\ Department of Mathematical Sciences, Tsinghua
University, Beijing 100084, China. Email:
zliang@math.tsinghua.edu.cn\\
Weiming Wu\\ Department of Mathematical Sciences, Tsinghua
University, Beijing 100084, China. Email:  wuweiming03@gmail.com
 }
\maketitle
\begin{abstract}
This paper works out  fair values of  stock loan model with
automatic termination clause, cap and margin. This stock loan is
treated as a generalized perpetual American option with possibly
negative interest rate and some constraints. Since it helps a bank
to control the risk, the banks charge less service fees compared to
stock loans without any constraints. The automatic termination
clause, cap and margin are in fact a stop order set by the bank.
Mathematically, it is a kind of optimal stopping problems arising
from the pricing of financial products which is first revealed. We
aim at establishing explicitly the value of such a loan and ranges
of fair values of key parameters : this loan size, interest rate,
cap, margin and fee for providing such a service and quantity of
this automatic termination clause and relationships among these
parameters as well as the optimal exercise times. We present
numerical results and make analysis about the model parameters and
how they impact on value of stock loan. \vskip 10 pt
 \noindent { MSC}(2000): primary 91B24, 91B28,91B70 secondary 60H05, 60H10
 \vskip 5pt
 \noindent
 { Keywords:} Stock loan model; Automatic termination clause;
 Optimal stopping problem; Perpetual American option;
 Black-Scholes model.
\end{abstract}
\vskip 15pt
 \noindent
\setcounter{equation}{0}
\section{{\small {\bf Introduction}}}
\vskip 5pt \noindent
 A stock loan is a popular financial product
provided by many banks and financial institutions in which a client
(borrower), who owns one share of stock, borrows a loan of amount
$q$ from a bank (lender) with the share of stock as collateral, and
the bank receives amount $c$ from the client as the service fee. The
client may regain the stock by repaying the principal and interest
(that is, $qe^{\gamma t}$, where $\gamma$ is continuously
compounding loan interest rate ) to the bank at any time $t$, or
surrender the stock instead of repaying the loan.  The key point of
making the stock loan contract is to find  values of the parameters
$q$, $c$, and $\gamma $. The stock loan has many advantages for the
client. It creates liquidity while overcoming the barrier of large
block sales, such as triggering tax events or controlling
restrictions on sales of stocks. It also serves as a hedge against a
market down turn : if the stock price goes down, the client may just
forfeit the stock and does not repay the loan; if however the stock
price goes up, the client keeps all the benefits upside by repaying
the principal and interest. In other words, a stock loan can help
high-net-worth investors with large equity positions to achieve a
variety of objectives.\vskip 15pt\noindent The stock loan valuation
is essentially a kind of  optimal stopping problems. A typical and
well-known example of optimal stopping problems is the American
option. There are many literatures about the American option, we
refer the readers to Hull \cite{Hull}, Gerber and Shiu \cite{Ger}
and Broadie and Detemple \cite{Bro}, Jiang \cite{Jia}, Detemple et
al. \cite{Det},  Cheuk and Vorst \cite{CHeuk}, Windcliff et al.
\cite{Win}, and Dai et al. \cite{Dai}. Stock loan valuation has
attracted much interest of both academic researchers and financial
institution recently. Xia and Zhou \cite{stock} first studied the
problem of stock loan under the Black-Scholes framework. They
established stock loan model and got its valuation by a pure
probabilistic approach. They also pointed out that variational
inequality approach can not be directly applied to these kinds of
stock loans.  Zhang and Zhou \cite{Zha} used the variational
inequality approach to solve the stock loan pricing problem treated
in\cite{stock}, and they carried the approach over to the models in
which the underlying stock price follows a geometric Brownian motion
with regime switching(cf.\cite{Zha}). Dai and Xu
\cite{Dai1}considered the valuation of stock loan that the
accumulative dividends may be gained by the borrower or the lender
according  to the provision of the loan.
\vskip 15pt\noindent
 In order to control effectively the risk and make the stock loan
 contract worthwhile
 so that it can provide the writer with protection, the bank and
 client embed an
automatic termination clause, cap $L$ and margin $k$ into the stock
loan. The stock loan can then be terminated via the clause when the
share price is too low, that is, the automatic termination clause is
triggered if and only if the discounted stock price is less than $a$
(i.e., $e^{-\gamma t}S_{t}\leq a$).  Since it helps a bank to
control the risk, the bank should charge less service fee initially
compared to the stock loan without the automatic termination clause.
The bank will terminate a stock loan contract by acquiring the
ownership of the collateral equity and the client will not need to
pay the principle and interest when the automatic termination clause
is triggered at time $t$. Hence,  the client can choose to regain
the stock by repaying the loan principal and interest. The automatic
termination clause can be described by a quantity $a$ $(0<a\leq q)$,
which is also a key point of negotiation between the bank and the
client. Because there is a distinction between what is actuarial
fair value and values as the solution of a mathematical problem, we
need to determine  the fair value of this loan, ranges of fair
values of the parameters $(q, \gamma, c,a, L , k)$ and relationships
among these parameters in some reasonable sense so that the client
and the bank know whether this actuarial value is reasonable( that
is, this value belongs to the ranges and satisfies the
relationships). Therefore, working out this value in this contract
will be a main task in negotiation between the client and the bank
initially. Thus this is a problem of theoretical value finding as
well as practical implication for option pricing. To the best of our
knowledge, there are a few results on this topic have been reported,
we refer the readers to Dai and Xu \cite{Dai1}, Liu and Xu
\cite{Gy}, Xia and Zhou\cite{stock} and Zhang and Zhou\cite{Zha}.
The main purpose of the present paper is to determine the right
values of these parameters $(q, \gamma, c,a, L, K)$: the principal
$q$, the interest rate $\gamma $, the fee $c$ charged by the bank,
the barrier $a$, the cap $L$ and margin $k$ in the stock loan
contract with automatic termination clause and find relationships
among these parameters by deriving optimal exercise time (stopping
time) and valuation formulas of the stock loan under the assumption
$\delta>0$ and $\gamma-r+\delta\geq0$ or $\delta=0$ and
$\gamma-r>\frac{\sigma^{2}}{2}$( where $\delta$ is the dividend
yield, $r$ is the risk-free rate, and $\sigma$ is the volatility).
We try to develop variational inequality method(cf.
\cite{Methods,Oksendal,Liang}) with probabilistic approach to  deal
with
 this value of such a loan and  ranges of fair values of this stock loan
size, interest rate, cap, margin and fee for providing such a
service and quantity of this automatic termination clause and
relationships among these parameters. The paper establishes a
general setting to broaden the applicability of our method
concerning different stock loans. \vskip 15pt \noindent The paper is
organized as following: In section 2, we formulate a mathematical
model of the stock loan with automatic termination clause. In
section 3, we evaluate the stock loan by variational inequality
method and obtain an optimal exercise time. In section 4, we derive
probabilistic solutions and terminable exercise times of the stock
loan. In section 5,we study  a mathematical model of the stock loan
with automatic termination clause, cap and margin by applying the
way we used in the section 3 and section 4 to determine fair values
of the stock loan in section 6. In section 7 we give some numerical
results of two stock loans. In section 8, we give an over view of
the main findings in this paper. In appendix, we further give
discussions of the parameters. \vskip 15pt
 \noindent
 \setcounter{equation}{0}
\section{{\small {\bf Formulation of stock
loan with automatic termination clause}}} \vskip 5pt
 \noindent We introduce in
this section the standard Black-Scholes model in a continuous-time
financial market consisting of two assets: a risky asset stock $S$
and a locally risk-less money account $B\equiv \{B_{t}, t\geq 0 \}$.
The uncertainty is described by a standard Brownian motion $\mathcal
{W}\equiv\{\mathcal {W}_{t}, t\geq 0\}$ defined on a risk-neutral
probability space $(\Omega, \mathcal {F}, \{ \mathcal
{F}_{t}\}_{t\geq 0}, P)$, where $\{ \mathcal {F}_{t}\}_{t\geq 0}  $
is the $P-$augmentation of the filtration generated by $ \mathcal
{W} $, with $\mathcal {F}_{0}=\sigma\{ \Omega, \emptyset\}$ and $
\mathcal {F}=\sigma\{ \bigcup_{t\geq 0}\mathcal {F}_{t} \}$. The
terms {\sl fair value}, {\sl right value} and {\sl proper value},
$\cdots$  in this paper mean that they are determined under this
risk-neutral probability $P$.  The locally risk-less money account
$B$ evolves according to the following dynamic system,
\begin{eqnarray*}
dB_{t}=rB_{t}dt,\   \  r>0.
\end{eqnarray*}
The market price process $S$ of the stock follows a geometric
Brownian motion,
\begin{eqnarray}\label{E2.1}
S_{t}=S_{0}e^{(r-\delta-\frac{\sigma^{2}}{2})t+\sigma\mathcal
{W}_{t}},
\end{eqnarray}
where $S_{0}$ is the initial stock price, $\delta\geq0$ is the
dividend yield and $\sigma >0$ is the volatility. \vskip 15pt
\noindent We now explain the stock loan (i.e., the contract) with an
automatic termination clause in this paper as follows: \vskip 15pt
\noindent $\bullet$ At the beginning,  a client borrows amount $q
(q>0)$ from a bank with one share of stock as the collateral, and
gives the bank amount $c(0\leq c \leq q)$ as the service fee. As a
result, the client gets amount $q-c$ from the bank. \vskip 15pt
\noindent $ \bullet$   The client has the option to regain the stock
by paying amount $qe^{\gamma t}$( where $\gamma$ is the continuously
compounding loan interest rate) to the bank (lender) at any time
$t$, or just gives the stock to the bank without repaying the loan
before triggering the automatic termination clause. Dividends of the
stock are collected by the bank until the client regains the stock,
the dividends are not credited to the client.
 \vskip 15pt \noindent
$\bullet$  The client has no obligation to regain the stock whether
the automatic termination clause is triggered or not. If the
automatic termination clause is triggered, then the bank acquires
the collateral stock,  the contract is terminated, and the client
loses the option to regain the stock. \vskip 15pt \noindent
$\bullet$ The values of $(q, \gamma, c,a)$: the principal $q$, the
interest rate $\gamma $, the fee $c$ charged by the bank, and the
barrier $a$ are specified before this contract is exercised. \vskip
15pt \noindent Xia and Zhou \cite{stock} established a stock loan
without an automatic termination clause  by probabilistic approach.
They proved that the optimal exercise time is a hitting time:
\begin{eqnarray*}
\tau_{b}=\inf{\{t\geq 0,e^{-\gamma t}S_{t} \geq b\}},
\end{eqnarray*}
then determined the value by maximizing  expected discounted payoff
of this stock loan given by $\tau_{b}$ for some $b\geq q\vee S_{0}$,
where $q$ is the principal of the stock loan and $S_{0}$ is the
initial stock price. \vskip 15pt \noindent The automatic termination
clause is one of our main interest. The main goal of sections 3 and
4 is to determine  fair value $f(S_0)$ ( see (2.2) below) of the
stock loan with an automatic termination clause and ranges of fair
values of the parameters $ (q, c, \gamma,a)$ under the assumption
$\delta>0$ and $\gamma-r+\delta\geq0$ or $\delta=0$ and
$\gamma-r>\frac{\sigma^{2}}{2}$ (see  Proposition
\ref{solution_exist} below). This problem can be treated as a
generalized perpetual American option with a client initially buying
at price $S_{0}-q+c$. \vskip 15pt \noindent
 We consider the automatic
termination clause as follows: if the stock price satisfies
$e^{-\gamma t}S_{t} \leq a, 0<a\leq q$ ($\gamma$ is the loan
interest rate), then this stock loan is terminated. So the
discounted payoff of this American contingent claim at stopping time
$\tau \in \mathcal {T}_{0}$ is
\begin{eqnarray*}
Y(\tau)=e^{-r\tau}(S_{\tau}-qe^{\gamma
\tau})_{+}I_{\{\tau<\tau_{a}\}},
\end{eqnarray*}
where $\tau_{a}=\inf{\{t\geq 0,e^{-\gamma t}S_{t} \leq a\}}$ and
$\mathcal {T}_{0}$ denotes all $\{ \mathcal {F}_{t}\}_{t\geq 0}$
-stopping times. The initial value of this American contingent claim
is the following (cf. \cite{Methods,Towards}),
\begin{eqnarray}\label{reward}
f(x)&=&\sup\limits_{\tau \in \mathcal {T}_{0}}{\bf E}\big [Y(\tau)\big ]\nonumber\\
&=&\sup\limits_{\tau \in \mathcal {T}_{0}}{\bf E}\big
[e^{-r\tau}(S_{\tau}-qe^{\gamma
\tau})_{+}I_{\{\tau<\tau_{a}\}}\big ]\nonumber\\
&=&\sup\limits_{\tau \in \mathcal {T}_{0}}{\bf E}\big [e^{-\tilde
{r}\tau}(\tilde{S}_{\tau}-q)_{+}I_{\{\tau<\tau_{a}\}}\big ],
\end{eqnarray}
where $\tilde{r}=r-\gamma\leq 0$ and $\tilde{S}_{t}=e^{-\gamma
t}S_{t},\tilde{S}_{0}=S_{0}=x$. The value of this American
contingent claim at time $t$ is the following,
\begin{eqnarray}
V_{t}=\sup\limits_{\tau \in \mathcal {T}_{t}}{\bf E}\big[e^{-
{r}(\tau-t)}(S_{\tau}-qe^{\gamma
\tau})_{+}I_{\{\tau<\tau_{a}\}}|\mathcal{F}_{t}\big],
\end{eqnarray}
i.e.,
\begin{eqnarray*}
e^{-rt}V_{t}=\sup\limits_{\tau \in \mathcal {T}_{t}}{\bf E}\big
[e^{-\tilde
{r}\tau}(\tilde{S}_{\tau}-q)_{+}I_{\{\tau<\tau_{a}\}}|\mathcal{F}_{t}\big
],
\end{eqnarray*}
where $\mathcal {T}_{t}$ denotes all $\{ \mathcal {F}_{t}\}_{t\geq
0}$ -stopping times $\tau $ with $\tau \geq t$ a.s.. \vskip 15pt
\noindent In the following sections we first determine  fair value
$f(S_0)$ of the stock loan with an automatic termination clause,
then find ranges of fair values of the parameters $ (q, c,
\gamma,a)$ and relationships among these parameters by $f(S_{0})$
and equality $f(S_{0})=S_{0}-q+c$. \vskip15pt
 \noindent
\setcounter{equation}{0}
\section{{\small {\bf Variational inequality method}}}
\vskip 5pt
 \noindent
In this section we compute the fair  value $f(S_0)$ of the stock
loan with an automatic termination clause  treated as a generalized
perpetual American option with automatic termination clause. Note
that since the payoff process of the option $Y(t)\geq 0$ a.s., and
$Y(t)>0$ with a positive probability if $S_{0}>a$, $Y(t)=0$ a.s. if
$S_{0}\leq a$, to avoid arbitrage we assume that
\begin{eqnarray}
 S_{0}-q+c>0,\ S_{0}>a,
\end{eqnarray}
and
\begin{eqnarray}
 S_{0}-q+c=0,\ S_{0}\leq a.
\end{eqnarray}
 Now we introduce some quantitative
properties on $f$ defined via (\ref{reward}) and solve the optimal
stopping time problem (\ref{reward}) by {\sl variational method and
stopping time techniques.}
\begin{Prop}\label{inequality}
$(x-q)_{+}\leq f(x)\leq x$ for all $x\geq 0$.
\end{Prop}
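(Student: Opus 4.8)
The plan is to prove the two bounds separately: the lower bound $(x-q)_+\le f(x)$ by testing an explicit stopping time in (\ref{reward}), and the upper bound $f(x)\le x$ by a nonnegative--supermartingale / optional--sampling estimate.

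For the lower bound, note first that $Y(\tau)\ge 0$ for every $\tau\in\mathcal T_0$, so $f(x)\ge 0$ always. If $x\le a$, then since $0<a\le q$ we have $x\le q$, hence $(x-q)_+=0\le f(x)$. If $x>a$, I would take the admissible candidate $\tau\equiv 0\in\mathcal T_0$. Because $\tilde S_0=x>a$ and $t\mapsto\tilde S_t=e^{-\gamma t}S_0e^{(r-\delta-\sigma^2/2)t+\sigma\mathcal W_t}$ has a.s.\ continuous paths, the hitting time $\tau_a=\inf\{t\ge 0:\tilde S_t\le a\}$ is a.s.\ strictly positive, so $I_{\{0<\tau_a\}}=1$ a.s.\ and thus $\mathbf E[Y(0)]=(\tilde S_0-q)_+=(x-q)_+$. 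Taking the supremum over $\tau$ gives $f(x)\ge(x-q)_+$.

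For the upper bound, I would first record the pathwise estimate, valid for every $\tau\in\mathcal T_0$ on $\{\tau<\infty\}$,
\[
e^{-\tilde r\tau}(\tilde S_\tau-q)_+I_{\{\tau<\tau_a\}}\ \le\ e^{-\tilde r\tau}\tilde S_\tau\ =\ e^{-(r-\gamma)\tau}e^{-\gamma\tau}S_\tau\ =\ e^{-r\tau}S_\tau ,
\]
using $I_{\{\tau<\tau_a\}}\le 1$, $q\ge 0$ and $\tilde S_\tau\ge 0$; on $\{\tau=\infty\}$ the payoff $Y(\tau)$ vanishes by convention. Writing $Z_t=e^{-rt}S_t$, formula (\ref{E2.1}) gives $Z_t=xe^{-\delta t}N_t$ with $N_t=e^{-\sigma^2 t/2+\sigma\mathcal W_t}$ a nonnegative $\{\mathcal F_t\}$-martingale and $N_0=1$; since $\delta\ge 0$, $Z$ is a nonnegative supermartingale with $Z_0=x$. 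Optional sampling at the bounded times $\tau\wedge n$ gives $\mathbf E[Z_{\tau\wedge n}]\le Z_0=x$, and Fatou's lemma together with $Z_{\tau\wedge n}\to Z_\tau$ on $\{\tau<\infty\}$ yields $\mathbf E[Z_\tau I_{\{\tau<\infty\}}]\le x$. Hence $\mathbf E[Y(\tau)]\le x$ for all $\tau\in\mathcal T_0$, and taking the supremum gives $f(x)\le x$.

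The argument is essentially routine; the only mild subtleties are invoking a.s.\ path--continuity of $\tilde S$ to ensure $\tau_a>0$ when $x>a$, and the passage from bounded stopping times to arbitrary ones (in particular the behaviour on $\{\tau=\infty\}$) via Fatou's lemma rather than a direct optional--sampling statement. I would state these points carefully but not dwell on them.
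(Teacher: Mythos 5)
Your proof is correct and follows essentially the same route as the paper: test $\tau=0$ for the lower bound, and for the upper bound drop the indicator and the $-q$, reduce to the discounted stock price, and apply optional sampling to the underlying exponential martingale. If anything you are slightly more careful than the paper — you handle the $e^{-\delta t}$ factor by working with the supermartingale $Z_t=e^{-rt}S_t$ and pass from bounded stopping times to general ones via Fatou, whereas the paper stops at $\tau\wedge\tau_a$ and asserts equality from the ``strong martingale'' property directly; these are cosmetic differences, not a different method.
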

\begin{proof} \  By taking $\tau=0$ in (\ref{reward}) and noticing
that $\tau <\tau_{a}$, a.s.,  it is easy to see that $(x-q)_{+}\leq
f(x)$. As for the second inequality, we have
\begin{eqnarray*}
f(x)&=&\sup\limits_{\tau \in \mathcal {T}_{0}}{\bf E}\big
[e^{-\tilde
{r}\tau}(\tilde{S}_{\tau}-q)_{+}I_{\{\tau<\tau_{a}\}}\big ]\\
&\leq&\sup\limits_{\tau \in \mathcal {T}_{0}}{\bf E}\big [e^{-\tilde
{r}\tau}\tilde{S}_{\tau}I_{\{\tau<\tau_{a}\}}\big ]\\
&\leq&\sup\limits_{\tau \in \mathcal {T}_{0}}{\bf E}\big [e^{-\tilde
{r}(\tau\wedge\tau_{a})}\tilde{S}_{\tau\wedge\tau_{a}}\big ]\\
&=&\sup\limits_{\tau \in \mathcal {T}_{0}}{\bf E}\big
[xe^{\sigma\mathcal{W}_{\tau\wedge
\tau_{a}}-\frac{\sigma^{2}}{2}\tau\wedge\tau_{a}}\big ]\\
&=&x,
\end{eqnarray*}
where the last equality follows from  the optional sampling theorem
and  the process $\{e^{\sigma\mathcal{W}_t-\frac{\sigma^{2}t}{{2}}},
t\geq 0 \}$  is a strong  martingale.
\end{proof}
\begin{Remark}
 It is easy to see from the definition of $f(x)$  that $f(x)$
is continuous,  convex and nondecreasing with respect to $x$.
\end{Remark}
\vskip 5pt \noindent Because the loan rate $\gamma$ is always
greater than risk-free rate $r$, our problem  reduces to a
generalized perpetual American contingent claim  with possibly
negative interest rate $r-\gamma  $, where the term {\sl negative
interest rate } is just used to state relationship between the model
treated in this paper and an American perpetual call option with a
time-varying striking price, and has no other implications. We have
the following. \vskip10pt
 \noindent
\begin{Them}\label{main}
Assume that $\delta>0$ and $\gamma-r+\delta\geq0$ or $\delta=0 $ and
$\gamma-r>\frac{\sigma^{2}}{2}$. If
 $f(x)$  is continuous, $f(x)\in \mathcal
{C}^{1}[(0,\infty)\setminus \{a\}]\cap \mathcal
{C}^{2}[(0,\infty)\setminus \{a,b\}]$ for some $b\geq 0 $ which we
will discuss later,  and $ f(x)$ satisfies the following variational
problem
\begin{eqnarray}\label{variational}
\left\{
\begin{array}{l l}
\max{\{\frac{1}{2}\sigma^{2}x^{2}f^{''}+
(\tilde{r}-\delta)xf^{'}-\tilde{r}f,(x-q)_{+}-f
\}}=0,x>a,\\
f(x)=0,x\leq a,
\end{array}
\right.
\end{eqnarray}
then   $f(x)$ must be the function defined by (\ref{reward} ) and
 $\tau_{b}=\inf{\{t\geq 0:e^{-\gamma t}S_{t}\geq b\}}$ attains the
supremum in (\ref{reward}), i.e., $\tau_{b}$  is optimal.
\end{Them}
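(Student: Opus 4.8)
The plan is to use a verification argument: assume $f$ solves the variational problem \eqref{variational}, and show it dominates the payoff along every admissible stopping time while being attained at $\tau_b$. First I would fix the domain structure. Define $b$ as the boundary point where $f$ switches from the "continuation" regime (where the PDE operator $\mathcal{L}f := \frac{1}{2}\sigma^2 x^2 f'' + (\tilde r - \delta) x f' - \tilde r f = 0$ holds and $f > (x-q)_+$) to the "stopping" regime ($f = x - q$, with $\mathcal{L}f \le 0$ forced by the $\max$). On $x \le a$ we have $f = 0$. The first step is to record, from the variational equality, the two pointwise inequalities valid on all of $(0,\infty)\setminus\{a,b\}$: $\mathcal{L}f \le 0$ and $f \ge (x-q)_+$, together with $f = 0$ on $(0,a]$.

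Next I would apply a generalized Itô formula (Itô–Tanaka / Meyer's formula, since $f$ is only $C^1$ at $a$ and $b$ but $C^2$ elsewhere, and is convex so $f''$ exists as a measure with no singular atoms at those kinks given the $C^1$ pasting) to the discounted process $e^{-\tilde r (t\wedge\tau_a)} f(\tilde S_{t\wedge\tau_a})$, where $\tilde S_t = e^{-\gamma t} S_t$ satisfies $d\tilde S_t = (\tilde r - \delta)\tilde S_t\, dt + \sigma \tilde S_t\, d\mathcal{W}_t$. The drift term of the resulting decomposition is exactly $e^{-\tilde r s}(\mathcal{L}f)(\tilde S_s)\, ds \le 0$, and the local-time terms at $a,b$ vanish by the $C^1$ matching, so $e^{-\tilde r (t\wedge\tau_a)} f(\tilde S_{t\wedge\tau_a})$ is a supermartingale. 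Taking expectations, using $f(\tilde S_0) = f(x)$ and optional sampling at any $\tau \in \mathcal{T}_0$ together with $f \ge (x-q)_+$ and $f \ge 0$, gives
\[
f(x) \ge \mathbf{E}\big[e^{-\tilde r(\tau\wedge\tau_a)} f(\tilde S_{\tau\wedge\tau_a})\big] \ge \mathbf{E}\big[e^{-\tilde r\tau}(\tilde S_\tau - q)_+ I_{\{\tau<\tau_a\}}\big],
\]
where on $\{\tau \ge \tau_a\}$ one uses that the indicator kills the term and $f\ge 0$ handles the sign; hence $f(x) \ge \sup_\tau \mathbf{E}[Y(\tau)]$.

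For the reverse inequality and optimality of $\tau_b$, I would run the same Itô expansion but stopped at $\tau_b \wedge \tau_a \wedge t$: on the continuation region $\{a < \tilde S_s < b\}$ we have $\mathcal{L}f = 0$ exactly, so the stopped process is a genuine martingale, giving $f(x) = \mathbf{E}[e^{-\tilde r(\tau_b\wedge\tau_a\wedge t)} f(\tilde S_{\tau_b\wedge\tau_a\wedge t})]$. Then I let $t\to\infty$: on $\{\tau_b < \tau_a\}$ one has $\tilde S_{\tau_b} = b$ and $f(b) = b - q = (\tilde S_{\tau_b}-q)_+$; on $\{\tau_a \le \tau_b\}$ one has $\tilde S_{\tau_a} = a$, $f(a) = 0$, matching the indicator $I_{\{\tau<\tau_a\}}$ being zero; and on $\{\tau_b = \tau_a = \infty\}$ the contribution vanishes in the limit (here is where the parameter hypothesis enters, controlling the growth/decay of $e^{-\tilde r t} f(\tilde S_t)$ via the drift sign of $\log\tilde S_t$, i.e. $\tilde r - \delta - \sigma^2/2 = r - \gamma - \delta - \sigma^2/2 < 0$ under either assumption, so $\tilde S_t \to 0$ a.s. and the tail is negligible). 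This yields $f(x) = \mathbf{E}[e^{-\tilde r\tau_b}(\tilde S_{\tau_b}-q)_+ I_{\{\tau_b<\tau_a\}}]$, so the supremum in \eqref{reward} is attained at $\tau_b$ and $f$ coincides with the value function.

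The main obstacle I expect is the careful justification of the limiting/integrability step: passing $t\to\infty$ inside the expectation requires uniform integrability of $e^{-\tilde r(\tau_b\wedge\tau_a\wedge t)} f(\tilde S_{\tau_b\wedge\tau_a\wedge t})$, which is where Proposition \ref{inequality} ($0 \le f(x) \le x$) combined with the parameter assumption does the real work — one bounds $e^{-\tilde r s} f(\tilde S_s) \le e^{-\tilde r s}\tilde S_s = x\,e^{\sigma \mathcal{W}_s - \sigma^2 s/2}$, a true martingale, but to get the indicator-$\{\tau_a\}$ behavior and the vanishing on the survival set $\{\tau_a = \infty\}$ one must argue that $\tilde S_s$ does not persist; this is exactly why the hypothesis $\delta > 0, \gamma - r + \delta \ge 0$ or $\delta = 0, \gamma - r > \sigma^2/2$ is imposed. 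A secondary technical point is confirming the generalized Itô formula applies with $f \in C^1[(0,\infty)\setminus\{a\}] \cap C^2[(0,\infty)\setminus\{a,b\}]$ — this follows from convexity of $f$ (noted in the Remark) so that $f''$ is a positive measure and the change-of-variables formula for convex functions applies, with the absence of local-time jumps guaranteed by the $C^1$ pasting at $a$ and $b$.
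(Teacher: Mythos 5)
Your overall architecture coincides with the paper's: a verification argument via It\^o's formula applied to $e^{-\tilde r t}f(\tilde S_t)$, which is a supermartingale globally (the drift is $-e^{-\tilde r t}(\delta\tilde S_t-\tilde r q)I_{\{\tilde S_t>b\}}\,dt\le 0$ under the standing hypotheses) and a martingale up to $\tau_b\wedge\tau_a$, followed by optional sampling at $\tau\wedge\tau_a\wedge t$ and a passage to the limit $t\to\infty$. Your Step 1 and your identification of $\tau_b$ match the paper's Steps one and two.

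The genuine gap is in the limit passage, which you correctly flag as the main obstacle but do not close. The term that must be shown to vanish is $\mathbf{E}\bigl[e^{-\tilde r t}f(\tilde S_t)I_{\{t<\tau_a\wedge\tau_b\}}\bigr]$, where $e^{-\tilde r t}=e^{(\gamma-r)t}$ grows exponentially. Neither of your proposed mechanisms suffices: (i) the bound $e^{-\tilde r s}f(\tilde S_s)\le e^{-\tilde r s}\tilde S_s=x\,e^{\sigma\mathcal W_s-\sigma^2 s/2-\delta s}$ (note the $e^{-\delta s}$ missing from your formula) dominates by a (super)martingale, but for $\delta=0$ that martingale is the canonical example of a non--uniformly-integrable martingale converging a.s.\ to $0$ while keeping constant expectation $x$, so it yields neither uniform integrability nor convergence of expectations; (ii) the a.s.\ convergence $\tilde S_t\to 0$ (hence $\tau_a<\infty$ a.s.) controls the pointwise limit but says nothing about the expectation against the exploding factor $e^{(\gamma-r)t}$. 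What is needed is a quantitative estimate showing that the decay of the survival probability beats the growth of the discount factor. The paper supplies this via H\"older's inequality: $\mathbf{E}[e^{-\tilde r t}\tilde S_tI_{\{\tau_a>t\}}]\le\bigl[\mathbf{E}(e^{-\tilde r t}\tilde S_t)^{1+\epsilon}\bigr]^{1/(1+\epsilon)}\bigl[P(\tau_a>t)\bigr]^{\epsilon/(1+\epsilon)}$, with the first factor computed explicitly as $S_0e^{-\delta t+\epsilon\sigma^2t/2}$ and the second bounded by $\alpha e^{-\mu^2\epsilon t/(2(1+\epsilon))}$ from the hitting-time density of drifted Brownian motion; one then chooses $\epsilon$ so that $\delta-\epsilon\sigma^2/2+\mu^2\epsilon/(2(1+\epsilon))>0$, which is exactly where the hypothesis $\gamma-r>\sigma^2/2$ is used when $\delta=0$. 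Without an estimate of this type (or an equivalent one using $f\le b-q$ on $\{t<\tau_b\}$ together with a two-sided exit-time bound), your Step 2, and hence the optimality of $\tau_b$, is not established.
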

\vskip10pt
 \noindent
\begin{Remark}
The value $a$ is always  determined by negotiation between  the bank
and the client initially, the $b$ is an endogenous parameter to be
determined late in this model.
\end{Remark}
\vskip5pt
 \noindent
\begin{proof} Let  $f(x)$ satisfy problem (\ref{variational}), we want to show that $f$ must be the function
defined by (\ref{reward}). Since $f(x)=0$, $0<x\leq a$, we only need
to prove Theorem \ref{main} in the
region $a<x$. We will prove Theorem \ref{main} in two steps.\\
{\bf Step one.} \ We show that for any stopping time $\tau$
\begin{eqnarray}\label{step1}
f(x)\geq {\bf E}\big [e^{-\tilde{r}\tau}(\tilde
{S}_{\tau}-q)_{+}I_{\{\tau<\tau_{a}\}}\big ].
\end{eqnarray}
 Applying
 It\^{o} formula to  convex function $f$ and the process $\tilde{S}_{t} $
 defined in (\ref{reward}) and using (\ref{variational})we have
\begin{eqnarray} \label{Ito1}
d(e^{-\tilde{r}t}f(\tilde{S}_{t}))&=&e^{-\tilde{r}t}
\tilde{S}_{t}f^{'}(\tilde{S}_{t})\sigma
d\mathcal{W}(t)-e^{-\tilde{r}t}\big
[(\delta\tilde{S}_{t}-\tilde{r}q)
I_{\{\tilde{S}_{t}>b\}}\big ]dt\nonumber\\
&  \equiv &d\mathcal{M}(t)-d\Lambda(t),
\end{eqnarray}
where
\begin{eqnarray*}
\mathcal{M}(t)\equiv\int_{0}^{t}e^{-\tilde{r}u}\tilde{S}_{u}f^{'}(\tilde{S}_{u})\sigma
d\mathcal{W}(u)
\end{eqnarray*}
is a  martingale, and
\begin{eqnarray*}
\Lambda(t)\equiv\int_{0}^{t}e^{-\tilde{r}u}
[(\delta\tilde{S}_{u}-\tilde{r}q)I_{\{\tilde{S}_{u}>b\}}]du
\end{eqnarray*}
is  a nonnegative and nondecreasing process because $\delta
x-\tilde{r}q\geq0,x>b$ with $b>q\geq\frac{r-\gamma}{\delta}q$ under
the assumption $\delta>0$ and $\gamma-r+\delta\geq0$, and
$\tilde{r}= r-\gamma<0$ under the assumption $\delta=0$ and
$\gamma-r>\frac{\sigma^{2}}{2}$.

For any stopping time $\tau$ and any $t\in [0,\infty)$, by
(\ref{variational}), (\ref{Ito1}) and Proposition \ref{inequality}
we have
\begin{eqnarray}\label{fenjie}
f(\tilde{S}_{0})&=&{\bf E}\big [e^{-\tilde{r}(\tau\wedge
\tau_{a}\wedge t)}f(\tilde{S}_{\tau\wedge \tau_{a}\wedge t})\big
]+{\bf E}\big [\Lambda(\tau\wedge
\tau_{a}\wedge t)\big ]\nonumber\\
&\geq &{\bf E}\big [e^{-\tilde{r}(\tau\wedge \tau_{a}\wedge
t)}f(\tilde{S}_{\tau\wedge \tau_{a}\wedge t})\big ]\nonumber\\
&=&{\bf E}\big [e^{-\tilde{r}(\tau\wedge t)}f(\tilde{S}_{\tau\wedge
t})I_{\{\tau<\tau_{a}\}}\big ]+{\bf
E}\big[e^{-\tilde{r}(\tau_{a}\wedge
t)}f(\tilde{S}_{\tau_{a}\wedge t})I_{\{\tau_{a}\leq\tau\}}\big ]\nonumber\\
&\geq&{\bf E}\big [e^{-\tilde{r}(\tau\wedge
t)}(\tilde{S}_{\tau\wedge t}-q)_{+}I_{\{\tau<\tau_{a}\}}\big ]+{\bf
E}\big [e^{-\tilde{r}(\tau_{a}\wedge t)}f(\tilde{S}_{\tau_{a}\wedge
t})I_{\{\tau_{a}\leq\tau\}}\big ]\nonumber\\
&=&{\bf E}\big [e^{-\tilde{r}(\tau\wedge t)}(\tilde{S}_{\tau\wedge
t}-q)_{+}I_{\{\tau<\tau_{a}\}}\big ]+{\bf E}\big
[e^{-\tilde{r}t}f(\tilde{S}_{t})I_{\{\tau_{a}\leq\tau\}}I_{\{\tau_{a}>
t\}}\big ],
\end{eqnarray}
where we have used $f(\tilde{S}_{\tau_{a}})=0$.

Obviously,
\begin{eqnarray*} e^{-\tilde{r}(\tau\wedge
t)}(\tilde{S}_{\tau\wedge
t}-q)_{+}I_{\{\tau<\tau_{a}\}}\leq\sup\limits_{0\leq
t<\infty}e^{-\tilde{r}t}(\tilde{S}_{t}-q)_{+}
\end{eqnarray*}
and
\begin{eqnarray*}
e^{-\tilde{r}(\tau_{a}\wedge t)}f(\tilde{S}_{\tau_{a}\wedge
t})I_{\{\tau_{a}\leq\tau\}}\leq\sup\limits_{0\leq
t<\infty}e^{-\tilde {r}t}\tilde{S}_{t}.
\end{eqnarray*}
By Lemma 3.1 in \cite{stock} we have
\begin{eqnarray}
{\bf E}\big[\sup\limits_{0\leq
t<\infty}e^{-\tilde{r}t}(\tilde{S}_{t}-q)_{+}\big]<\infty
\end{eqnarray}
if $\delta>0$ and $\gamma-r+\delta\geq 0$ or $\delta=0$ and
$\gamma-r>\frac{\sigma^{2}}{2}$. By using the
 dominated convergence theorem  and letting $t\rightarrow \infty$
\begin{eqnarray}\label{convergence1}
{\bf E}\big [e^{-\tilde{r}(\tau\wedge t)}(\tilde{S}_{\tau\wedge
t}-q)_{+}I_{\{\tau<\tau_{a}\}}\big ]\rightarrow {\bf E}\big
[e^{-\tilde{r}\tau}(\tilde{S}_{\tau}-q)_{+}I_{\{\tau<\tau_{a}\}}].
\end{eqnarray}
In order for (\ref{step1}), we claim that the second term on the
right-side of (\ref{fenjie}) tends to 0 as $t\rightarrow \infty$. By
Proposition \ref{inequality} and H\"{o}lder's inequality
\begin{eqnarray}
{\bf E}\big
[e^{-\tilde{r}t}f(\tilde{S}_{t})I_{\{\tau_{a}\leq\tau\}}I_{\{\tau_{a}>
t\}}\big ]&\leq&{\bf E}\big
[e^{-\tilde{r}t}\tilde{S}_{t}I_{\{\tau_{a}> t\}}\big ]\nonumber\\
&\leq&\big[{\bf
E}(e^{-\tilde{r}t}\tilde{S}_{t})^{1+\epsilon}\big]^{\frac{1}{1+\epsilon}}\big[{\bf
E}(I_{\{\tau_{a}>
t\}})\big]^{\frac{\epsilon}{1+\epsilon}},\epsilon>0.\nonumber\\
\end{eqnarray}
It is easy to derive
\begin{eqnarray}\label{holder1}
\big [{\bf
E}(e^{-\tilde{r}t}\tilde{S}_{t})^{1+\epsilon}\big]^{\frac{1}{1+\epsilon}}=
S_{0}e^{-\delta t+\frac{\epsilon\sigma^{2}}{2}t}.
\end{eqnarray}
Next we prove that $[{\bf E}(I_{\{\tau_{a}>
t\}})\big]^{\frac{\epsilon}{1+\epsilon}}\leq \alpha
e^{-\frac{\mu^{2}\epsilon}{2(1+\epsilon)}t}$. Since
\begin{eqnarray*}
\tau_{a}=\tau_{a_{1}}=\inf{\{t\geq 0:\mathcal{W}_{t}+\mu t\leq
a_{1}\}},
\end{eqnarray*}
where $\mu=-(\frac{\sigma}{2}+\frac{\gamma-r+\delta}{\sigma}),
a_{1}=\frac{1}{\sigma}\log\frac{a}{S_{0}}$, using  density of
hitting time $\tau_{a_{1}}$(cf.\cite{Handbook}) we have
\begin{eqnarray*}
{\bf E}(I_{\{\tau_{a}>
t\}})&=&\int_{t}^{\infty}\frac{|a_{1}|}{\sqrt{2\pi
u^{3}}}e^{-\frac{(\mu u-a_{2})^{2}}{2u}}du\\
&=&\int_{t}^{\infty}\frac{|a_{1}|}{\sqrt{2\pi
u^{3}}}e^{-\frac{\mu^{2}u}{2}+\mu a_{1}-\frac{a_{1}^{2}}{2u}}du\\
&\leq&\alpha_{1}\int_{t}^{\infty}e^{-\frac{\mu^{2}u}{2}}du,\\
&\leq&\alpha_{2}e^{-\frac{\mu^{2}}{2}t}
\end{eqnarray*}
for $t$ sufficiently large, where $\alpha_{1}$ and $\alpha_{2}$ are
some positive  constants,  so
\begin{eqnarray}\label{holder2}
[{\bf E}(I_{\{\tau_{a}> t\}})\big]^{\frac{\epsilon}{1+\epsilon}}\leq
\alpha e^{-\frac{\mu^{2}\epsilon}{2(1+\epsilon)}t},
\end{eqnarray}
and  $\alpha>0$ is a constant. Because  we can find $\epsilon>0$
such that
$\delta-\frac{\epsilon\sigma^{2}}{2}+\frac{\mu^{2}\epsilon}{2(1+\epsilon)}>0$
if $\delta>0$, or  $\delta=0$ and $\gamma-r>\frac{\sigma^{2}}{2}$,
by
 (\ref{holder1}) and (\ref{holder2}) we have
\begin{eqnarray}\label{convergence3}
{\bf E}\big
[e^{-\tilde{r}t}f(\tilde{S}_{t})I_{\{\tau_{a}\leq\tau\}}I_{\{\tau_{a}>
t\}}\big ]&\leq&{\bf E}\big
[e^{-\tilde{r}t}\tilde{S}_{t}I_{\{\tau_{a}> t\}}\big ]\nonumber\\
&\leq&\big[{\bf
E}(e^{-\tilde{r}t}\tilde{S}_{t})^{1+\epsilon}\big]^{\frac{1}{1+\epsilon}}\big[{\bf
E}(I_{\{\tau_{a}> t\}})\big]^{\frac{\epsilon}{1+\epsilon}}\nonumber\\
&\leq&S_{0}e^{-\delta t+\frac{\epsilon\sigma^{2}}{2}t}\alpha
e^{-\frac{\mu^{2}\epsilon}{2(1+\epsilon)}t}\nonumber\\
&=&\alpha
S_{0}e^{-(\delta-\frac{\epsilon\sigma^{2}}{2}+
\frac{\mu^{2}\epsilon}{2(1+\epsilon)})t}\rightarrow
0,t\rightarrow \infty.
\end{eqnarray}

Using (\ref{convergence1}), (\ref{convergence3}) and letting
$t\rightarrow \infty$ in (\ref{fenjie}),
\begin{eqnarray}
f(\tilde{S}_{0})&\geq&{\bf E}\big
[e^{-\tilde{r}\tau}(\tilde{S}_{\tau}-q)_{+}I_{\{\tau<\tau_{a}\}}\big
].
\end{eqnarray}
\vskip 10pt {\bf Step two.} We show that
\begin{eqnarray}
 f(x)={\bf E}\big
[e^{-\tilde{r}\tau_{b}}(\tilde{S}_{\tau_{b}}-q)_{+}I_{\{\tau_{b}<\tau_{a}\}}\big
].
\end{eqnarray}
Let $\tau=\tau_{b}$, we have $\Lambda(\tau_{b}\wedge\tau_{a})=0$,
$f(\tilde{S}_{\tau_{b}})=\tilde{S}_{\tau_{b}}-q$ and
$f(\tilde{S}_{\tau_{a}})=0$, hence the (\ref{fenjie}) becomes
\begin{eqnarray*}
f(\tilde{S}_{0})={\bf E}\big [e^{-\tilde{r}\tau_{b}}(
\tilde{S}_{\tau_{b}}-q)_{+}I_{\{\tau_{b}<\tau_{a},\tau_{b}\leq
t\}}\big ]+{\bf E}\big
[e^{-\tilde{r}t}f(\tilde{S}_{t})I_{\{t<\tau_{b},t< \tau_{a}\}}\big
].
\end{eqnarray*}
By (\ref{convergence3})
\begin{eqnarray*}
{\bf E}\big [e^{-\tilde{r}t}f(\tilde{S}_{t})I_{\{t< \tau_{b}, t<
\tau_{a}\}}\big ]\rightarrow 0,t\rightarrow \infty.
\end{eqnarray*}
Then
\begin{eqnarray}
f(\tilde{S}_{0})={\bf E}\big
[e^{-\tilde{r}\tau_{b}}(\tilde{S}_{\tau_{b}}-q)_{+}I_{\{\tau_{b}<\tau_{a}\}}\big
].
\end{eqnarray}
Thus we complete the proof.
\end{proof}
\vskip10pt
 \noindent
\begin{Remark}
Given an initial stock price $S_{0}=x$, $\tau_{a}$ exists and is
determined by the bank and the client initially. By Theorem
\ref{main} $\tau_{b}$ is the optimal stopping time, the client will
regain the stock at $\tau_{b}$ to get maximum return by paying
amount $qe^{\gamma \tau_{b}}  $ to the bank before the stock loan is
terminated. So the stock loan is terminated at stopping time
$\tau_{a}\wedge \tau_{b}$.
\end{Remark}
\vskip10pt  \noindent
\begin{Remark} By the same procedure as in  the initial value $f(x)$,
we can easily get
\begin{eqnarray*}
e^{-\tilde{r}t}f(\tilde{S}_{t})&=&\sup\limits_{\tau \in \mathcal
{T}_{t}}{\bf E}\big[e^{-\tilde{r}\tau}(\tilde
{S}_{\tau}-q)_{+}I_{\{\tau<\tau_{a}\}}|\mathcal{F}_{t}\big]\\
&=&e^{-rt}V_{t}
\end{eqnarray*} and
\begin{eqnarray*}
V_{t}=e^{\gamma t}f(e^{-\gamma t}S_{t}).
\end{eqnarray*}
\end{Remark}
\vskip 15pt  \noindent Now we calculate $f(x)$ via using Theorem
\ref{main}.
 We only need to work out $f(x)$ in the region $(a,b)$ by smooth
fit principle. For this, it suffices to  solve the following
problem,
\begin{eqnarray}\label{equation_ab}
\left\{
\begin{array}{l l}
\frac{1}{2}\sigma^{2}x^{2}f^{''}+(\tilde{r}-\delta)xf^{'}-\tilde{r}f=0,\ a<x<b,\\
f(a)=0,f(b)=b-q,f^{'}(b)=1.
\end{array}
\right.
\end{eqnarray}
The general solutions of (\ref{equation_ab}) has the following form,
\begin{eqnarray*}
f(x)=C_{1}x^{\lambda_{1}}+C_{2}x^{\lambda_{2}}
\end{eqnarray*}
and the $\lambda_{1}$ and $\lambda_{2}$ are defined by
\begin{eqnarray}\label{solve_lumda}
\lambda_{1}=\frac{-\mu+\sqrt{\mu^{2}-2(\gamma-r)}} {\sigma},\
\lambda_{2}=\frac{-\mu-\sqrt{\mu^{2}-2(\gamma-r)}}{\sigma},
\end{eqnarray}
where $\mu=-(\frac{\sigma}{2}+\frac{\gamma-r+\delta}{\sigma})$.

If $\delta>0$ and $\gamma-r+\delta\geq0$, then
$\lambda_{1}>1>\lambda_{2}$. If $\delta=0$ and
$\gamma-r>\frac{\sigma^{2}}{2}$, then
$\lambda_{1}=\frac{2(\gamma-r)}{\sigma^{2}}>1=\lambda_{2}$.

By the boundary conditions we have
\begin{eqnarray}\label{ysolution}
\left\{
\begin{array}{l l l}
f(a)=C_{1}a^{\lambda_{1}}+C_{2}a^{\lambda_{2}}=0,\\
f(b)=C_{1}b^{\lambda_{1}}+C_{2}b^{\lambda_{2}}=b-q,\\
f^{'}(b)=C_{1}\lambda_{1}b^{\lambda_{1}-1}+C_{2}\lambda_{2}b^{\lambda_{2}-1}=1.
\end{array}
\right.
\end{eqnarray}
Solving  the first two equations of (\ref{ysolution}) we obtain
$C_{2}=-C_{1}a^{\lambda_{1}-\lambda_{2}}$ and
$C_{1}=\frac{b-q}{b^{\lambda_{1}}-a^{\lambda_{1}-\lambda_{2}}b^{\lambda_{2}}}$.
By the last equality in (\ref{ysolution}) and  letting $b=a y$ we
have
\begin{eqnarray}\label{sovle_b}
g(y)&
\equiv&(\lambda_{1}-1)y^{\lambda_{1}+1}-\frac{q}{a}\lambda_{1}y^{\lambda_{1}}+
(1-\lambda_{2})y^{\lambda_{2}+1}+\frac{q}{a}\lambda_{2}y^{\lambda_{2}}\nonumber\\
&=&0.
\end{eqnarray}
If $y^{*}$ solves the equation (\ref{sovle_b}), then $b=ay^{*}$. $b$
only depends on $a$  for fixed $(\gamma,\delta,\sigma,q)$. Thus
\begin{eqnarray*}
C_{1}=\frac{1}{C}(b-q)b^{\frac{\mu}{\sigma}}
a^{-\frac{\sqrt{\mu^{2}-2\lambda}}{\sigma}}
\end{eqnarray*}
and
\begin{eqnarray*}
C_{2}=-\frac{1}{C}(b-q)b^{\frac{\mu}{\sigma}}
a^{\frac{\sqrt{\mu^{2}-2\lambda}}{\sigma}},
\end{eqnarray*}
where $C=(\frac{b}{a})^{\frac{\sqrt{\mu^{2}-2\lambda}}{\sigma}}
-(\frac{a}{b})^{\frac{\sqrt{\mu^{2}-2\lambda}}{\sigma}}$.  We will
show that the $ y^*$ determined by (3.7) is unique and $b=ay^{*}$
exists in next section.
\begin{Remark}
Dai and Xu\cite{Dai1}(2010) solved other stock loan by variation
approach. It seems that the the proof in \cite{Dai1} does not work
for Theorem\ref{main} because of the automatic termination clause.
The proof of Theorem\ref{main} needs delicate estimates.
\end{Remark}
\vskip 15pt
\noindent
 \setcounter{equation}{0}
\section{{\small {\bf Probabilistic Solution}}}
\vskip 5pt \noindent In this section we will give  the probabilistic
solution of stock loan with automatic termination clause. The
initial stock price $S_{0}=x$. Using Theorem \ref{main}, $\tau_{b}$
is the optimal stopping time and $\{\tau_{a}=\tau_{b}\}=\baro$ for
$a\neq b$, it is easy to see from (\ref{reward}) that
\begin{eqnarray}\label{resultexp}
f(x)&=&{\bf E}\big [e^{-\tilde{r}\tau_{b}}(\tilde
{S}_{\tau_{b}}-q)_{+}I_{\{\tau_{b}<\tau_{a}\}}\big ].
\end{eqnarray}
 Therefore  we have the
following.
\vskip10pt
 \noindent
\begin{CR}\label{expectform} We assume the same conditions as in
Theorem \ref{main}. Then
\begin{eqnarray} f(x)=\left\{
\begin{array}{l l l}
0,& x\leq a,\\
x-q,& x\geq b,\\
(b-q){\bf E}\big
[e^{-\tilde{r}\tau_{b}}I_{\{\tau_{b}<\tau_{a}\}}\big ],&a<x<b.
\end{array}
\right.
\end{eqnarray}
\end{CR}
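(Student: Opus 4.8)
The plan is to assemble the three branches of the formula from results already established, namely Theorem \ref{main} and equation (\ref{resultexp}); the cases $x\leq a$ and $x\geq b$ are essentially immediate. When $x\leq a$ we have $\tilde{S}_{0}=x\leq a$, so $\tau_{a}=0$ and $I_{\{\tau<\tau_{a}\}}\equiv 0$ for every stopping time $\tau\geq 0$; hence (\ref{reward}) gives $f(x)=0$, in agreement with the second line of (\ref{variational}). When $x\geq b$, instead $\tilde{S}_{0}=x\geq b$ forces $\tau_{b}=0$, while $x\geq b>a$ gives $\tau_{a}>0$ a.s., so $I_{\{\tau_{b}<\tau_{a}\}}=1$ a.s.; substituting $\tau_{b}=0$ into (\ref{resultexp}) and using $x\geq b\geq q$ yields $f(x)=(\tilde{S}_{0}-q)_{+}=x-q$.

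For the remaining range $a<x<b$ I would start from (\ref{resultexp}),
\[
f(x)={\bf E}\big[e^{-\tilde{r}\tau_{b}}(\tilde{S}_{\tau_{b}}-q)_{+}I_{\{\tau_{b}<\tau_{a}\}}\big],
\]
and simplify the integrand on the event $\{\tau_{b}<\tau_{a}\}$. On that event $\tau_{b}$ is finite, and since $t\mapsto\tilde{S}_{t}=e^{-\gamma t}S_{t}$ is continuous (clear from (\ref{E2.1})) with $\tilde{S}_{0}=x<b$, the first-passage definition $\tau_{b}=\inf\{t\geq 0:\tilde{S}_{t}\geq b\}$ forces $\tilde{S}_{\tau_{b}}=b$. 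Since $b\geq q$, this gives $(\tilde{S}_{\tau_{b}}-q)_{+}=b-q$, a deterministic constant, which factors out of the expectation to yield $f(x)=(b-q){\bf E}\big[e^{-\tilde{r}\tau_{b}}I_{\{\tau_{b}<\tau_{a}\}}\big]$.

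I do not expect a genuine obstacle here: the substantive work --- optimality of $\tau_{b}$ and formula (\ref{resultexp}) --- was already carried out in Theorem \ref{main}. The only points that need care are that $b\geq q$ (recorded in the proof of Theorem \ref{main} and in the construction $b=ay^{*}$ of Section 3, which is exactly what makes $(b-q)_{+}=b-q$), the path continuity of $\tilde{S}$, and the remark that the degenerate events on which a hitting time equals $+\infty$ cause no trouble, since the indicator $I_{\{\tau_{b}<\tau_{a}\}}$ already excludes them.
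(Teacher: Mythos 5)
Your proposal is correct and follows the same route the paper takes: the paper states the corollary as an immediate consequence of the optimality of $\tau_{b}$ (equation (\ref{resultexp})), and your three cases --- $\tau_{a}=0$ when $x\leq a$, $\tau_{b}=0$ when $x\geq b$, and $\tilde{S}_{\tau_{b}}=b$ by path continuity on $\{\tau_{b}<\tau_{a}\}$ when $a<x<b$ --- are exactly the details being elided, including the needed fact $b>q$.
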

\noindent Now we compute the following expectation with the initial
price $x=S_{0}$ in the interval $(a,b)$,
\begin{eqnarray}\label{expecta}
{\bf E}\big [e^{-\tilde{r}\tau_{b}}I_{\{\tau_{b}<\tau_{a}\}}\big ].
\end{eqnarray}
Define
\begin{eqnarray*}
&&\mu=-(\frac{\sigma}{2}+\frac{\gamma-r+\delta}{\sigma}),\ \  \lambda=\gamma-r,\\
&& b_{1}=\frac{1}{\sigma}\log\frac{b}{S_{0}}, \ \
a_{1}=\frac{1}{\sigma}\log\frac{a}{S_{0}}.
\end{eqnarray*}
Obviously,
\begin{eqnarray}
\tau_{a}=\tau_{a_{1}}=\inf{\{t\geq 0:\mathcal{W}_{t}+\mu t\leq
a_{1}\}}
\end{eqnarray}
and
\begin{eqnarray}
\tau_{b}=\tau_{b_{1}}=\inf{\{t\geq 0:\mathcal {W}_{t}+\mu t\geq
b_{1}\}}.
\end{eqnarray}
Using well-known results about standard Brownian motion on an
interval and Girsanov theorem (cf.\cite{Brownian}), we compute
(\ref{expecta}) as the following. \vskip10pt
 \noindent
\begin{Lemma}\label{twoexpet}
If $\mu^{2}-2\lambda\geq 0$, then
\begin{eqnarray}\label{expect2}
{\bf E}\big [e^{-\tilde{r}\tau_{b}}I_{\{\tau_{b}<\tau_{a}\}}\big ]
&=&
{\bf E}\big [e^{-\tilde{r}\tau_{b_{1}}}I_{\{\tau_{b_{1}}<\tau_{a_{1}}\}}\big]\nonumber\\
&=&\frac{1}{C}(e^{\mu b_{1}-a_{1}\sqrt{\mu^{2}-2\lambda}}-e^{\mu
b_{1}+a_{1}\sqrt{\mu^{2}-2\lambda}})\nonumber\\
&=&\frac{1}{C}(b^{\frac{\mu}{\sigma}}a^{-\frac{\sqrt{\mu^2-2\lambda}}{\sigma}}x^{\lambda_{1}}
-b^{\frac{\mu}{\sigma}}a^{\frac{\sqrt{\mu^2-2\lambda}}{\sigma}}x^{\lambda_{2}})
\end{eqnarray}
where $C=(\frac{b}{a})^{\frac{\sqrt{\mu^{2}-2\lambda}}{\sigma}}
-(\frac{a}{b})^{\frac{\sqrt{\mu^{2}-2\lambda}}{\sigma}}$, $x=S_{0}$
and $\lambda=\gamma-r$.
\end{Lemma}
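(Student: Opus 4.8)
The plan is to compute the expectation $\mathbf{E}\big[e^{-\tilde r\tau_b}I_{\{\tau_b<\tau_a\}}\big]$ by first passing to the Brownian picture. Under the substitutions $\mu=-(\tfrac{\sigma}{2}+\tfrac{\gamma-r+\delta}{\sigma})$, $a_1=\tfrac{1}{\sigma}\log\tfrac{a}{S_0}$, $b_1=\tfrac{1}{\sigma}\log\tfrac{b}{S_0}$, the discounted price $\tilde S_t=e^{-\gamma t}S_t$ hits level $b$ (resp.\ falls to level $a$) exactly when the drifted Brownian motion $X_t:=\mathcal W_t+\mu t$ hits $b_1$ (resp.\ $a_1$), so that $\tau_b=\tau_{b_1}$ and $\tau_a=\tau_{a_1}$ as asserted in (4.6)--(4.7). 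This reduces the problem to computing $\mathbf{E}\big[e^{\lambda\tau_{b_1}}I_{\{\tau_{b_1}<\tau_{a_1}\}}\big]$ with $\lambda=\gamma-r=-\tilde r\ge 0$, a two-sided exit problem for a Brownian motion with drift $\mu$ started at $0$ with $a_1<0<b_1$.

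The computation itself I would carry out by a Girsanov change of measure to remove the drift. Under the new measure $\tilde{\mathbf P}$ defined by the density $\exp(-\mu\mathcal W_T-\tfrac{\mu^2}{2}T)$ on $\mathcal F_T$, the process $X_t=\mathcal W_t+\mu t$ is a standard Brownian motion, and on $\{\tau_{b_1}<\tau_{a_1}\}$ one has $X_{\tau_{b_1}}=b_1$, so
\begin{eqnarray*}
\mathbf{E}\big[e^{\lambda\tau_{b_1}}I_{\{\tau_{b_1}<\tau_{a_1}\}}\big]
=e^{\mu b_1}\,\tilde{\mathbf E}\big[e^{(\lambda-\mu^2/2)\tau_{b_1}}I_{\{\tau_{b_1}<\tau_{a_1}\}}\big].
\end{eqnarray*}
Since $\mu^2-2\lambda\ge0$, the exponent $\lambda-\mu^2/2\le 0$ is nonpositive, so the remaining quantity is a (sub-)unit Laplace transform of the first exit time of standard Brownian motion from the interval $(a_1,b_1)$, restricted to exit through the upper endpoint. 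The classical formula for this — obtained from the fact that $\cosh$ and $\sinh$ of $\sqrt{2s}\,X_t$ solve the relevant ODE, or directly from the known joint law of $(\tau_{a_1}\wedge\tau_{b_1},X_{\tau_{a_1}\wedge\tau_{b_1}})$ — gives, with $s=\mu^2/2-\lambda$ so that $\sqrt{2s}=\sqrt{\mu^2-2\lambda}$,
\begin{eqnarray*}
\tilde{\mathbf E}\big[e^{-s\tau_{b_1}}I_{\{\tau_{b_1}<\tau_{a_1}\}}\big]
=\frac{\sinh\!\big((-a_1)\sqrt{2s}\big)}{\sinh\!\big((b_1-a_1)\sqrt{2s}\big)}
=\frac{e^{-a_1\sqrt{\mu^2-2\lambda}}-e^{a_1\sqrt{\mu^2-2\lambda}}}{e^{(b_1-a_1)\sqrt{\mu^2-2\lambda}}-e^{-(b_1-a_1)\sqrt{\mu^2-2\lambda}}}.
\end{eqnarray*}
Multiplying by $e^{\mu b_1}$ and clearing the common factor $e^{-b_1\sqrt{\mu^2-2\lambda}}$ from numerator and denominator yields $\frac1C\big(e^{\mu b_1-a_1\sqrt{\mu^2-2\lambda}}-e^{\mu b_1+a_1\sqrt{\mu^2-2\lambda}}\big)$ with $C=e^{(b_1-a_1)\sqrt{\mu^2-2\lambda}/1}$ rewritten as $(\tfrac{b}{a})^{\sqrt{\mu^2-2\lambda}/\sigma}-(\tfrac{a}{b})^{\sqrt{\mu^2-2\lambda}/\sigma}$, which is the second line of (\ref{expect2}).

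The final line of (\ref{expect2}) is then pure bookkeeping: substitute $b_1=\tfrac1\sigma\log\tfrac{b}{S_0}$ and $a_1=\tfrac1\sigma\log\tfrac{a}{S_0}$, use $e^{\mu b_1}=b^{\mu/\sigma}S_0^{-\mu/\sigma}$ and $e^{\mp a_1\sqrt{\mu^2-2\lambda}}=a^{\mp\sqrt{\mu^2-2\lambda}/\sigma}S_0^{\pm\sqrt{\mu^2-2\lambda}/\sigma}$, and recognize from (\ref{solve_lumda}) that $\lambda_{1}=\tfrac{-\mu+\sqrt{\mu^2-2\lambda}}{\sigma}$ and $\lambda_{2}=\tfrac{-\mu-\sqrt{\mu^2-2\lambda}}{\sigma}$, so that $S_0^{-\mu/\sigma+\sqrt{\mu^2-2\lambda}/\sigma}=x^{\lambda_1}$ and $S_0^{-\mu/\sigma-\sqrt{\mu^2-2\lambda}/\sigma}=x^{\lambda_2}$. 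I expect the only real subtlety is justifying the Girsanov/optional-sampling manipulation — one must check that the exponential martingale argument is legitimate up to the possibly unbounded stopping time $\tau_{b_1}\wedge\tau_{a_1}$; this is handled by first working with $\tau_{b_1}\wedge\tau_{a_1}\wedge t$, noting the exit time from a bounded interval is finite and has exponential moments (so $s>0$ keeps everything integrable and $s=0$ is a harmless limit), and then letting $t\to\infty$ by dominated convergence, exactly as in the estimates already used in the proof of Theorem \ref{main}. Everything else is the standard interval-exit formula for Brownian motion, which the paper cites from \cite{Brownian} and \cite{Handbook}.
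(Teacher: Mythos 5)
Your argument is correct, but it takes a genuinely different route from the paper. The paper's proof starts from the explicit series density of $\tau_{b_1}$ on the event $\{\tau_{b_1}<\tau_{a_1}\}$ (the method-of-images expansion from \cite{Handbook}), integrates the Laplace transform term by term after a Fubini interchange, evaluates each term as a one-sided hitting-time transform, and finally sums the resulting geometric series to produce the factor $1/C$; the boundary case $\mu^2-2\lambda=0$ is then handled by letting $\lambda_n\uparrow\lambda$ and monotone convergence. You instead remove the drift by Girsanov, reduce to the classical $\sinh$-ratio formula
$\tilde{\mathbf E}\big[e^{-s\tau_{b_1}}I_{\{\tau_{b_1}<\tau_{a_1}\}}\big]=\sinh\big((-a_1)\sqrt{2s}\big)/\sinh\big((b_1-a_1)\sqrt{2s}\big)$
for driftless Brownian motion exiting $(a_1,b_1)$ through the top, and observe that the hypothesis $\mu^{2}-2\lambda\geq 0$ is exactly what makes $s=\mu^2/2-\lambda\geq 0$, so the optional-sampling/truncation step is legitimate. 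This is cleaner: it avoids the interchange of sum and integral and the geometric-series summation entirely, and the denominator $e^{(b_1-a_1)\sqrt{\mu^2-2\lambda}}-e^{-(b_1-a_1)\sqrt{\mu^2-2\lambda}}$ is already equal to $C$ with no further manipulation (your remark about ``clearing the common factor $e^{-b_1\sqrt{\mu^2-2\lambda}}$'' and the garbled display of $C$ are unnecessary and should be deleted --- the identity $b_1-a_1=\tfrac1\sigma\log\tfrac{b}{a}$ does all the work). Two small points to tidy: the formula as you derive it presumes $a_1<0<b_1$, i.e.\ $a<x<b$, which is the only regime where the lemma is used, and is worth saying; and $\lambda=\gamma-r$ need not be nonnegative under the paper's standing assumptions, though your argument never actually uses that, only $\mu^2-2\lambda\ge0$.
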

\begin{proof} It is well known (cf.\cite{Handbook,Brownian}) that
the density of $\tau_{b_{1}}$ under $\tau_{b_{1}}<\tau_{a_{1}}$ is
\begin{eqnarray*}
P(\tau_{b_{1}}\in dt,\tau_{b_{1}}<\tau_{a_{1}})=\frac{e^{\mu
b_{1}-\frac{1}{2}\mu^{2}t}}{\sqrt{2\pi
t^{3}}}\sum_{n=-\infty}^{+\infty}(2n(b_{1}-a_{1})+b_{1})
e^{-\frac{(2n(b_{1}-a_{1})+b_{1})^{2}}{2t}}dt.
\end{eqnarray*}
If $\mu^{2}-2\lambda \geq 0$,  then,  by Laplace transform of the
law of hitting time of Brownian motion with drift, it easily follows
that (cf.\cite{Handbook,Brownian,stock})
\begin{eqnarray}\label{fubini}
{\bf E}\big (e^{\lambda\tau_{b}}I_{\{\tau_{b}<\tau_{a}\}}\big )&=&
{\bf E}\big (e^{\lambda\tau_{b_{1}}}I_{\{\tau_{b_{1}}<\tau_{a_{1}}\}}\big )\nonumber\\
&=&\int_{0}^{+\infty}e^{\lambda t}P(\tau_{b_{1}}\in
dt,\tau_{b_{1}}<\tau_{a_{1}})\nonumber\\
&=&\int_{0}^{+\infty}e^{\lambda t}\frac{e^{\mu
b_{1}-\frac{1}{2}\mu^{2}t}}{\sqrt{2\pi
t^{3}}}\sum_{n=-\infty}^{+\infty}(2n(b_{1}-a_{1})
+b_{1})e^{-\frac{(2n(b_{a}-a_{1})+b_{1})^{2}}{2t}}dt\nonumber\\
&=&e^{\mu
b_{1}}\sum_{n=-\infty}^{+\infty}\int_{0}^{+\infty}e^{\lambda
t}e^{-\frac{1}{2}\mu^{2}t}\frac{1}{\sqrt{2\pi
t^{3}}}(2n(b_{1}-a_{1})+b_{1})e^{-\frac{(2n(b_{a}-a_{1})+b_{1})^{2}}{2t}}dt\nonumber\\
&=&e^{\mu a_{1}-\mu
\tilde{x}}\sum_{n=-\infty}^{+\infty}\int_{0}^{+\infty}e^{\lambda
t}\frac{1}{\sqrt{2\pi t^{3}}}\tilde{x}e^{-\frac{(\tilde{x}-\mu
t)^{2}}{2t}}dt,
\end{eqnarray}
where $\tilde{x}=2n(b_{1}-a_{1})+b_{1}$, if $n\geq 0,\tilde{x}\geq
0$; otherwise $\tilde{x}<0$. The fourth equality follows from
Fubini's theorem.

If $\mu^{2}-2\lambda>0$, then we can choose $\varepsilon>0$ such
that $\mu^{2}-2(\lambda+\varepsilon)>0$. We first consider the case:
$n\geq 0,\tilde{x}>0$,
\begin{eqnarray}\label{positive}
\int_{0}^{+\infty}e^{\lambda t}\frac{1}{\sqrt{2\pi
t^{3}}}\tilde{x}e^{-\frac{(\tilde{x}-\mu
t)^{2}}{2t}}dt&=&\int_{0}^{+\infty}e^{\lambda t}\frac{1}{\sqrt{2\pi
t^{3}}}|\tilde{x}|e^{-\frac{(\tilde{x}-\mu t)^{2}}{2t}}dt\nonumber\\
&=&e^{-\tilde{x}(\sqrt{\mu^{2}-2(\lambda+\varepsilon)}-\mu)}
\int_{0}^{+\infty}\frac{|\tilde{x}|}{\sqrt{2\pi
t^{3}}}e^{-\varepsilon
t}e^{-\frac{(\tilde{x}-\sqrt{\mu^{2}-2
(\lambda+\varepsilon)}t)^{2}}{2t}}dt\nonumber\\
&=&e^{-\tilde{x}(\sqrt{\mu^{2}-2(\lambda+\varepsilon)}-\mu)}e^{\sqrt{\mu^{2}
-2(\lambda+\varepsilon)}\tilde{x}-|\tilde{x}|\sqrt{\mu^{2}-
2(\lambda+\varepsilon)+2\varepsilon}}\nonumber\\
&=&e^{\mu\tilde{x}-|\tilde{x}|\sqrt{\mu^{2}-2\lambda}}.
\end{eqnarray}
Similarly, for  $n\leq -1,\tilde{x}<0$,
\begin{eqnarray}\label{negative}
\int_{0}^{+\infty}e^{\lambda t}\frac{1}{\sqrt{2\pi
t^{3}}}\tilde{x}e^{-\frac{(\tilde{x}-\mu
t)^{2}}{2t}}dt=-e^{\mu\tilde{x}-|\tilde{x}|\sqrt{\mu^{2}-2\lambda}}.
\end{eqnarray}
Hence, by  (\ref{fubini}),(\ref{positive}) and (\ref{negative})
\begin{eqnarray}
{\bf E}\big (e^{\lambda\tau_{b}}I_{\{\tau_{b}<\tau_{a}\}}\big )
&=&{\bf E}\big (e^{\lambda\tau_{b_{1}}}
I_{\{\tau_{b_{1}}<\tau_{a_{1}}\}}\big )\nonumber\\
&=&e^{\mu b_{1}}\sum_{n=0}^{\infty}e^{-\mu
\tilde{x}}e^{\mu\tilde{x}-\tilde{x}\sqrt{\mu^{2}-2\lambda}}-e^{\mu
b_{1}}\sum_{n=-1}^{-\infty}e^{-\mu
\tilde{x}}e^{\mu\tilde{x}+\tilde{x}\sqrt{\mu^{2}-2\lambda}}\nonumber\\
&=&e^{\mu
b_{1}}(\sum_{n=0}^{\infty}e^{-\tilde{x}\sqrt{\mu^{2}-2\lambda}}
-\sum_{n=-1}^{-\infty}e^{\tilde{x}\sqrt{\mu^{2}-2\lambda}})\nonumber\\
&=& \frac{1}{C}(e^{\mu b_{1}-a_{1}\sqrt{\mu^{2}-2\lambda}}-e^{\mu
b_{1}+a_{1}\sqrt{\mu^{2}-2\lambda}})\nonumber\\
&=&\frac{1}{C}\big
(b^{\frac{\mu}{\sigma}}a^{\frac{-\sqrt{\mu^{2}-2\lambda}}{\sigma}}x^{\lambda_{1}}
-b^{\frac{\mu}{\sigma}}a^{\frac{\sqrt{\mu^{2}-2\lambda}}{\sigma}}x^{\lambda_{2}}\big
).
\end{eqnarray}
For $\mu^{2}-2\lambda=0$, the conclusion follows from
$\lambda_{n}\uparrow\lambda$ and  monotone convergence theorem. Thus
we complete the proof.
\end{proof}
 \noindent By Corollary \ref{expectform} and Lemma \ref{expectform} we
have
\begin{eqnarray}\label{solution}
f(x)=\left\{
\begin{array}{l l l}
0,& x\leq a,\\
x-q,& x\geq b,\\
\frac{b-q}{C}(b^{\frac{\mu}{\sigma}}
a^{\frac{-\sqrt{\mu^{2}-2\lambda}}{\sigma}}x^{\lambda_{1}}
-b^{\frac{\mu}{\sigma}}
a^{\frac{\sqrt{\mu^{2}-2\lambda}}{\sigma}}x^{\lambda_{2}}),&a<x<b,
\end{array}
\right.
\end{eqnarray}
where $S_{0}=\tilde{S}_{0}=x$, $C$ is given in Lemma
\ref{expectform},  $\lambda_{1}$ and $\lambda_{2}$  are given by
(\ref{solve_lumda}). It is easy to check that the above solution is
the same  solution as in last section.  $f(x)$ is continuous and
second order continuously differentiable except  points $a$ and $b$.
It suffices to compute $b$ in order to show that $f$ satisfies the
assumption in Theorem \ref{main}, that is, $f(x)$ is first order
continuously differentiable at the point $b$. \vskip 15pt\noindent
\begin{Remark} The proof of Lemma \ref{twoexpet} is somewhat similar to those in
Xia and Zhou \cite{stock}. Our case is more complicate and is very
difficulty in computation of (\ref{expect2}) and Theorem
\ref{wuweiming5} below.
\end{Remark}
\vskip15pt
 \noindent
Let $f^{'}(b)=1$, we want to show that there exists $
y^{*}>\frac{q}{a}$ satisfying (\ref{sovle_b}) and $y^{*}$ is unique
under certain assumptions on the parameters
$\gamma,r,\delta,\sigma,a$. \vskip10pt
 \noindent
\begin{Prop}\label{solution_exist}
If $\delta>0$ and $\gamma-r+\delta \geq0$, then there exists $
y^{*}>\frac{q}{a}$ such that $g(y^{*})=0$ and the $y^{*}$ is unique.
 $b=ay^{*}>q$ is unique too, where
$h(y)=\frac{\lambda_{1}+1-\lambda_{2}}
{\lambda_{1}}y^{1-\lambda_{2}}-\frac{q}{a}
\frac{\lambda_{1}-1-\lambda_{2}}{\lambda_{1}-1}y^{-\lambda_{2}}$,
$g(y)$ is defined by (\ref{sovle_b}).
\end{Prop}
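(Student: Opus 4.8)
The plan is to reduce everything to a unimodality property of the auxiliary function $G(y):=g(y)/y$, which on $(0,\infty)$ has exactly the same zeros as $g$. Recall from the discussion following (\ref{solve_lumda}) that under $\delta>0$, $\gamma-r+\delta\ge 0$, together with the standing assumption $\gamma\ge r$ (i.e.\ $\tilde r\le 0$) in force throughout the paper, one has $\lambda_1>1>\lambda_2\ge 0$ (the last inequality $\lambda_2\ge0$ following from $\gamma\ge r$ and the formulas (\ref{solve_lumda})); and recall $a<q$, so $q/a>1$. From (\ref{sovle_b}),
\[
G(y)=(\lambda_1-1)y^{\lambda_1}-\frac{q}{a}\lambda_1 y^{\lambda_1-1}+(1-\lambda_2)y^{\lambda_2}+\frac{q}{a}\lambda_2 y^{\lambda_2-1},\qquad g(y)=y\,G(y).
\]

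The crucial step is to differentiate $G$ and observe the clean factorization obtained by grouping the two $\lambda_1$-terms and the two $\lambda_2$-terms:
\[
G'(y)=\Bigl(y-\frac{q}{a}\Bigr)\Bigl[\lambda_1(\lambda_1-1)y^{\lambda_1-2}+\lambda_2(1-\lambda_2)y^{\lambda_2-2}\Bigr].
\]
Because $\lambda_1>1$ and $0\le\lambda_2<1$, the bracket is strictly positive on $(0,\infty)$, so $G$ is strictly decreasing on $(0,q/a)$ and strictly increasing on $(q/a,\infty)$, with a strict global minimum at $y=q/a$. This is exactly where the sign conditions $\lambda_1>1>\lambda_2\ge 0$ — that is, the hypotheses on $\gamma,r,\delta,\sigma$ — are used; the auxiliary function $h$ recorded in the statement is tailored to make the single sign change exploited here explicit.

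It remains to count the zeros of $G$ on $(q/a,\infty)$. A one-line substitution gives $g(q/a)=(q/a)^{\lambda_2+1}\bigl(1-(q/a)^{\lambda_1-\lambda_2}\bigr)$, hence $G(q/a)=(q/a)^{\lambda_2}\bigl(1-(q/a)^{\lambda_1-\lambda_2}\bigr)<0$ since $q/a>1$ and $\lambda_1>\lambda_2$; and $G(y)\to+\infty$ as $y\to\infty$, its leading term $(\lambda_1-1)y^{\lambda_1}$ having positive coefficient. As $G$ is continuous and strictly increasing on $(q/a,\infty)$, running from a negative value to $+\infty$, there is exactly one $y^{*}\in(q/a,\infty)$ with $G(y^{*})=0$, equivalently $g(y^{*})=0$. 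Then $b=ay^{*}>a\cdot(q/a)=q$, and uniqueness of $b$ is inherited from that of $y^{*}$ (with $a$ held fixed).

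The only non-routine ingredient is noticing that one should divide $g$ by $y$ before differentiating; once that is done, the factorization of $G'$, the monotonicity, and hence both existence and uniqueness of $y^{*}$ on $(q/a,\infty)$ are immediate. (Note that $G$ typically also has a second zero in $(0,q/a)$, corresponding to a value $b<q$; this is precisely why the constraint $y^{*}>q/a$ — which guarantees $b>q$, i.e.\ a positive redemption value $b-q$ — must be imposed explicitly rather than being automatic.)
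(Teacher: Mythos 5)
Your argument is correct where it applies, and it takes a genuinely different and in fact cleaner route than the paper. The paper establishes existence exactly as you do (sign change of $g$ between $q/a$ and $+\infty$, using the same computation $g(q/a)=(q/a)^{\lambda_{2}+1}\bigl(1-(q/a)^{\lambda_{1}-\lambda_{2}}\bigr)<0$), but for uniqueness it passes to $\tilde g(y)=y^{-\lambda_{2}}g(y)$ and proves \emph{convexity} of $\tilde g$ on $[q/a,\infty)$ via a second derivative; checking $\tilde g''\geq 0$ is delegated to Lemma \ref{AL} in the appendix, where it requires introducing the auxiliary function $h$ and a three-case discussion according to the sign of $\gamma-r$. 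You instead divide by $y^{1}$, and the resulting first derivative factors as $G'(y)=(y-q/a)\bigl[\lambda_{1}(\lambda_{1}-1)y^{\lambda_{1}-2}+\lambda_{2}(1-\lambda_{2})y^{\lambda_{2}-2}\bigr]$, which I have verified; strict monotonicity of $G$ on $(q/a,\infty)$ then gives uniqueness in one line, with no convexity lemma and no case analysis. This is a real simplification of the paper's argument.

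One caveat. Your positivity of the bracket rests on $\lambda_{2}\geq 0$, which you justify by importing $\gamma\geq r$ as a standing assumption. The proposition's stated hypothesis is only $\delta>0$ and $\gamma-r+\delta\geq 0$, and the paper's appendix explicitly treats the sub-case $\gamma<r$, $\gamma-r+\delta\geq 0$, in which $\lambda_{2}<0$ (recall $\lambda_{1}\lambda_{2}=2(\gamma-r)/\sigma^{2}$); so as written your proof does not cover the full range of the hypothesis. Fortunately your factorization survives there with one extra line: for $y\geq q/a\geq 1$ the bracket is nonnegative iff $\lambda_{1}(\lambda_{1}-1)y^{\lambda_{1}-\lambda_{2}}\geq(-\lambda_{2})(1-\lambda_{2})$, and since $y^{\lambda_{1}-\lambda_{2}}\geq 1$ this reduces to $\lambda_{1}^{2}-\lambda_{1}\geq\lambda_{2}^{2}-\lambda_{2}$, i.e.\ $\lambda_{1}+\lambda_{2}\geq 1$, which by (\ref{solve_lumda}) is exactly $\lambda_{1}+\lambda_{2}=1+2(\gamma-r+\delta)/\sigma^{2}\geq 1$ --- the same inequality the paper invokes in the third case of Lemma \ref{AL}. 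Adding that observation makes your proof complete under the proposition's actual hypotheses. (Both your proof and the paper's tacitly need $a<q$ strictly, so that $g(q/a)<0$ rather than $=0$; the edge case $a=q$ is degenerate in either treatment.)
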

\begin{proof}  Since $\delta>0$, we have $\lambda_{1}>1>\lambda_{2}$,
\begin{eqnarray*}
g(\frac{q}{a})=(\frac{q}{a})^{\lambda_{2}+1}
(1-(\frac{q}{a})^{\lambda_{1}-\lambda_{2}})<0
\end{eqnarray*}
and
\begin{eqnarray*}
\lim\limits_{y\rightarrow \infty}g(y)=\infty.
\end{eqnarray*}
By  continuity of $g(y)$, there exists $y^{*}>\frac{q}{a}$ such that
$g(y^*)=0$ and $b=ay^{*}>q$. Moreover,  it is easy to see from the
procedure in section 3  that the assumptions in Theorem 3.1 hold
for the $b$.\\
Next we prove the uniqueness of $y^*$. Define
\begin{eqnarray*}
\tilde{g}(y)&=&y^{-\lambda_{2}}g(y)\\
&=&(\lambda_{1}-1)y^{\lambda_{1}+1-\lambda_{2}}-
\frac{q}{a}\lambda_{1}y^{\lambda_{1}-\lambda_{2}}+
(1-\lambda_{2})y+\frac{q}{a}\lambda_{2}.
\end{eqnarray*}
Then
\begin{eqnarray*}
\tilde{g}^{''}(y)
&=&(\lambda_{1}-1)(\lambda_{1}+1-\lambda_{2})
(\lambda_{1}-\lambda_{2})y^{\lambda_{1}-\lambda_{2}-1}\\
&&-\frac{q}{a}\lambda_{1}(\lambda_{1}-\lambda_{2})
(\lambda_{1}-\lambda_{2}-1)y^{\lambda_{1}-\lambda_{2}-2}.
\end{eqnarray*}
Since $\tilde{g}^{''}(y)\geq0$, $\tilde{g}^(y)$ is convex (see lemma
6.1 in the appendix). So the uniqueness of $y^{*}$ easily follows
from the convexity and $\tilde{g}^(\frac{q}{a})< 0 $. Thus we
complete the proof.
\end{proof}
\vskip10pt
 \noindent
\begin{Remark}
The convexity of function $\tilde{g}(y)$ will be given in detail  in
Lemma \ref{AL} below.
\end{Remark}
\vskip5pt
 \noindent
\begin{Prop}\label{solution_exist2}
If $\delta=0 $ and $\gamma-r>\frac{\sigma^{2}}{2}$, then there
exists $ y^{*}>\frac{q}{a}$ such that $g(y^{*})=0$ and the $y^{*}$
is unique. So $b=ay^{*}>q$ is unique too, where $g(y)$ is defined by
(\ref{sovle_b}).
\end{Prop}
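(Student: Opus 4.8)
The plan is to exploit the special structure of this parameter regime: as recorded right after (\ref{solve_lumda}), when $\delta=0$ and $\gamma-r>\sigma^{2}/2$ one has $\lambda_{2}=1$ and $\lambda_{1}=2(\gamma-r)/\sigma^{2}>1$. Substituting $\lambda_{2}=1$ into the definition (\ref{sovle_b}) of $g$, the coefficient $1-\lambda_{2}$ of the $y^{\lambda_{2}+1}$ term vanishes, so
\begin{eqnarray*}
g(y) &=& (\lambda_{1}-1)y^{\lambda_{1}+1}-\frac{q}{a}\lambda_{1}y^{\lambda_{1}}+\frac{q}{a}\,y \;=\; y\,\tilde{g}(y),\\
\tilde{g}(y) &\equiv& (\lambda_{1}-1)y^{\lambda_{1}}-\frac{q}{a}\lambda_{1}y^{\lambda_{1}-1}+\frac{q}{a}.
\end{eqnarray*}
Since $y>0$, the positive zeros of $g$ are precisely those of $\tilde{g}$, so the whole problem reduces to analysing $\tilde{g}$ on $(q/a,\infty)$.

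For existence I would evaluate $\tilde{g}$ at the two natural endpoints. A short computation collapses to $\tilde{g}(q/a)=(q/a)\big(1-(q/a)^{\lambda_{1}-1}\big)$, which is strictly negative because $q>a$ and $\lambda_{1}-1>0$, while $\tilde{g}(y)\sim(\lambda_{1}-1)y^{\lambda_{1}}\to+\infty$ as $y\to\infty$ since $\lambda_{1}>1$. By continuity of $\tilde{g}$ there is some $y^{*}>q/a$ with $\tilde{g}(y^{*})=g(y^{*})=0$, whence $b=ay^{*}>q$.

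For uniqueness I would use monotonicity rather than convexity. Differentiating,
\begin{eqnarray*}
\tilde{g}'(y) = \lambda_{1}(\lambda_{1}-1)\,y^{\lambda_{1}-2}\,\big(y-\tfrac{q}{a}\big),
\end{eqnarray*}
which is strictly positive on $(q/a,\infty)$ because $\lambda_{1}>1$; hence $\tilde{g}$ is strictly increasing there and vanishes at most once, so together with the sign change above $y^{*}$ is the unique root of $g$ in $(q/a,\infty)$ and $b=ay^{*}>q$ is the unique admissible cap threshold. (This is the exact analogue of the convexity argument used for Proposition \ref{solution_exist}; in the present case the elementary monotonicity of $\tilde{g}$ on $(q/a,\infty)$ already suffices, so no appeal to the convexity lemma of the appendix is needed.) Finally, exactly as at the end of the proof of Proposition \ref{solution_exist}, I would note that with this $b$ the function $f$ of (\ref{solution}) is continuous, lies in $\mathcal{C}^{2}[(0,\infty)\setminus\{a,b\}]$, and is $\mathcal{C}^{1}$ at $b$ because the smooth-fit conditions $f(b)=b-q$, $f'(b)=1$ are built into (\ref{equation_ab}), so all hypotheses of Theorem \ref{main} hold for this $b$.

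I do not anticipate a real obstacle here; the regime $\lambda_{2}=1$ makes everything more explicit than in Proposition \ref{solution_exist}. The only points needing a little care are: using the strict inequality $\gamma-r>\sigma^{2}/2$ (not merely $\geq$), which is what forces $\lambda_{1}>1$ strictly and hence makes both $\tilde{g}(q/a)<0$ and the divergence at $+\infty$ strict; keeping the standing assumption $a<q$ in view so that $(q/a)^{\lambda_{1}-1}>1$; and noting that $\tilde{g}$ also has a (discarded) zero in $(0,q/a)$, so one must genuinely restrict to $y>q/a$ when claiming uniqueness.
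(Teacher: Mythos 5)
Your proof is correct and follows essentially the same route as the paper: both reduce to $\tilde g(y)=y^{-\lambda_2}g(y)$ (which for $\lambda_2=1$ is exactly your factorization $g(y)=y\,\tilde g(y)$), check $\tilde g(q/a)<0$ and $\tilde g(y)\to+\infty$, and then rule out a second root by a derivative computation. The only difference is cosmetic: the paper invokes $\tilde g''\geq 0$ and repeats the convexity argument of Proposition \ref{solution_exist}, whereas you observe that $\tilde g'(y)=\lambda_1(\lambda_1-1)y^{\lambda_1-2}(y-\tfrac{q}{a})>0$ on $(q/a,\infty)$ already gives strict monotonicity, which is a slightly more elementary way to the same conclusion.
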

\begin{proof} Since $\delta=0$ and $\gamma-r>\frac{\sigma^{2}}{2}$, we have
$\lambda_{1}=\frac{2(\gamma-r)}{\sigma^{2}}>1=\lambda_{2}$. It is
easy to prove $\tilde{g}^{''}(y)\geq0$. By an argument similar to
the proof of Proposition \ref{solution_exist},We can complete the
proof.
\end{proof}
\vskip10pt
 \noindent
\begin{Remark}
$\tau_{a}$ is the automated terminable stopping time of the stock
loan. The automatic termination clause provides a  protection for
the bank. However, the client may have more or less motivation to
take risk compared to the circumstance without the clause (or $a=0$)
via the value of $a$. Denote $\tau_{b(a)}$ is the optimal stopping
time and $f_{a}(x)$ is the initial value with the automatic
termination clause. Intuitively, we have
\begin{eqnarray*}
\lim_{a\rightarrow 0+}{b(a)}=b(0)
\end{eqnarray*}
and
\begin{eqnarray*}
\lim_{a\rightarrow 0+}{f_{a}(x)}=f_{0}(x),
\end{eqnarray*}
where $\tau_{b_(0)}$ is the optimal stopping time and $f_{0}(x)$ is
the initial value without the automatic termination clause
introduced by Xia and Zhou \cite{stock}. The consistent result
follows  from Proposition \ref{Prop4.3} below  in the case where
$\delta>0$ and $\gamma-r+\delta\geq0$.
\end{Remark}
\vskip10pt
 \noindent
\begin{Prop} \label{Prop4.3}
Assume that $\delta>0,\gamma-r+\delta\geq0$ and
$\delta=0,\gamma-r>\frac{\sigma^{2}}{2}$. Then
 we have\\
(1)\  $\lim\limits_{a\rightarrow 0+}b(a)=b(0)$.\\
(2)\  $\lim\limits_{a\rightarrow0+}f_{a}(x)= f_{0}(x)=\left\{
\begin{array}{l l}
x-q,& x\geq b(0),\\
(b(0)-q)(\frac{x}{b(0)})^{\lambda_{1}},&x<b(0),
\end{array}
\right. $\\
where $b(0)=\frac{q\lambda_{1}}{\lambda_{1}-1}$, $\lambda_{1}$ is
given by (\ref{solve_lumda}).
\end{Prop}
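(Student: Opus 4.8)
The plan is to exploit the fact, established in Propositions \ref{solution_exist} and \ref{solution_exist2}, that for each $a>0$ the endogenous barrier is $b(a)=a\,y^{*}(a)$, where $y^{*}(a)$ is the unique root larger than $q/a$ of $g(y)=0$ with $g$ given by (\ref{sovle_b}). Observing that $g$ depends on $a$ only through the ratio $q/a$, I would first substitute $z=a y$ (so that $z=b$ when $y=y^{*}(a)$) and rewrite the root equation (\ref{sovle_b}) directly in terms of $b=b(a)$:
\begin{eqnarray*}
(\lambda_{1}-1)b^{\lambda_{1}+1}-q\lambda_{1}b^{\lambda_{1}}a^{-1}\cdot a
+(1-\lambda_{2})b^{\lambda_{2}+1}+q\lambda_{2}b^{\lambda_{2}}a^{-1}\cdot a=0,
\end{eqnarray*}
which after collecting the $a$-dependence cleanly isolates the terms carrying a positive power of $a$. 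Letting $a\to 0+$, those terms drop out and the limiting equation becomes $(\lambda_{1}-1)b^{\lambda_{1}+1}=q\lambda_{1}b^{\lambda_{1}}$, i.e. $b=q\lambda_{1}/(\lambda_{1}-1)$, which is precisely $b(0)$. So the first and main task is to justify passing to the limit, i.e. to show $b(a)$ actually converges as $a\to 0+$ and that the limit solves the limiting equation.

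For that convergence I would proceed in two steps. First, I would derive uniform bounds on $b(a)$: a lower bound $b(a)>q$ is already in Proposition \ref{solution_exist}, and an upper bound $b(a)\le M$ for all small $a$ can be extracted from the sign analysis of $g$ — evaluating $\tilde g(y)=y^{-\lambda_{2}}g(y)$ at $y$ of order $1/a$ and using $\tilde g''\ge 0$ together with $\tilde g(q/a)<0$ to show the root $y^{*}(a)$ cannot push $b(a)=ay^{*}(a)$ beyond a fixed constant depending only on $(\lambda_1,\lambda_2,q)$. Having trapped $b(a)$ in a compact interval $[q,M]$, I would invoke continuity: any subsequential limit $b^{*}$ of $b(a)$ as $a\to 0+$ must, by continuity of the left-hand side above in $(a,b)$, satisfy the limiting equation $(\lambda_{1}-1)b^{\lambda_1+1}-q\lambda_1 b^{\lambda_1}=0$, whose only root in $[q,M]$ is $q\lambda_1/(\lambda_1-1)$. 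Since every subsequential limit equals this common value, $\lim_{a\to0+}b(a)=b(0)$, proving (1).

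For (2), I would plug $b=b(a)$ into the explicit formula (\ref{solution}) for $f_a(x)$ on $(a,b(a))$. As $a\to 0+$ one has $a^{\pm\sqrt{\mu^2-2\lambda}/\sigma}\to 0$ or $\infty$ at matched rates, so I would rewrite the middle branch of (\ref{solution}) in the equivalent form obtained by clearing the common factor — concretely, $f_a(x)=(b-q)\bigl(x^{\lambda_1}-a^{\lambda_1-\lambda_2}x^{\lambda_2}\bigr)\big/\bigl(b^{\lambda_1}-a^{\lambda_1-\lambda_2}b^{\lambda_2}\bigr)$, which follows from the values of $C_1,C_2$ in (\ref{ysolution}). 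Since $\lambda_1>\lambda_2$, the factors $a^{\lambda_1-\lambda_2}\to 0$, and using $b(a)\to b(0)$ from part (1) this converges to $(b(0)-q)(x/b(0))^{\lambda_1}$ for every fixed $x<b(0)$; on $\{x\ge b(0)\}$ the formula $f_a(x)=x-q$ is already $a$-free for $a$ small enough that $b(a)$ is close to $b(0)$. This recovers exactly the stated $f_0(x)$ of Xia and Zhou \cite{stock}. The only real obstacle is the uniform two-sided bound on $b(a)$ — once $b(a)$ is confined to a compact set, everything else is continuity of elementary functions; the case $\delta=0,\gamma-r>\sigma^2/2$ is handled identically, noting $\lambda_2=1$ so $a^{\lambda_1-\lambda_2}=a^{\lambda_1-1}\to 0$ still holds since $\lambda_1>1$, and $b(0)=q\lambda_1/(\lambda_1-1)$ with $\lambda_1=2(\gamma-r)/\sigma^2$.
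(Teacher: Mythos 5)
Your proposal is correct in substance, and for part (2) it coincides with the paper's (very terse) argument: both substitute $b=b(a)$ into the closed form (\ref{solution}) — equivalently $f_a(x)=(b-q)\bigl(x^{\lambda_1}-a^{\lambda_1-\lambda_2}x^{\lambda_2}\bigr)/\bigl(b^{\lambda_1}-a^{\lambda_1-\lambda_2}b^{\lambda_2}\bigr)$ — and let $a^{\lambda_1-\lambda_2}\to 0$ using $\lambda_1>\lambda_2$ together with part (1). The real difference is in part (1). The paper works with the same object you do, $F(a,b)=a^{\lambda_1+1}g(b/a)$, but deduces $b(a)\to b(0)$ from the implicit function theorem: $F$ and $F_b$ are continuous on $[0,q)\times[q,\infty)$, $F(0,b(0))=0$ and $F_b(0,b(0))>0$, so $F=0$ defines a continuous branch through $(0,b(0))$. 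You instead trap $b(a)$ in a compact interval $[q,M]$ and show every subsequential limit solves $F(0,b)=b^{\lambda_1}\bigl[(\lambda_1-1)b-\lambda_1 q\bigr]=0$, whose unique root in $[q,M]$ is $q\lambda_1/(\lambda_1-1)$. Your route costs you the a priori bound $b(a)\le M$, which you only sketch but which does hold: for $b\ge M$ large, the $a$-free part $(\lambda_1-1)b^{\lambda_1+1}-\lambda_1 qb^{\lambda_1}$ dominates the $O(a^{\lambda_1-\lambda_2})$ remainder uniformly in $a\le 1$, so $F(a,b)>0$ there. In exchange, your argument handles a point the paper's glosses over — that the local IFT branch actually coincides with the globally defined $b(a)=ay^*(a)$ for small $a$ — which in either treatment ultimately rests on the uniqueness of $y^*$ from Propositions \ref{solution_exist} and \ref{solution_exist2}. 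One blemish: your displayed rewriting of (\ref{sovle_b}) is garbled; the last two terms should carry the factor $a^{\lambda_1-\lambda_2}$ rather than $a^{-1}\cdot a$, and as written the display has no $a$-dependence at all. Your subsequent text makes clear you intend the correct exponent, but the display should be fixed.
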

\begin{proof} \  We first prove(1). By (\ref{sovle_b}) and
$y=\frac{b}{a}$
\begin{eqnarray*}
F(a,b)&=&a^{\lambda_{1}+1}g(\frac{b}{a})\\
&=&(\lambda_{1}-1)b^{\lambda_{1}+1}
-\lambda_{1}qb^{\lambda_{1}}-(\lambda_{2}-1)b^{\lambda_{2}
+1}a^{\lambda_{1}-\lambda_{2}}
+\lambda_{2}qb^{\lambda_{2}}a^{\lambda_{1}-\lambda_{2}}.
\end{eqnarray*}
Since $F(a,b)$ and $F'_{b}(a,b)$ are continuous on
$[0,q)\times[q,\infty)$, $F(0,b(0))=0$ and $F_{b}(0,b(0))>0$, by
implicit function theorem, there exists $\rho>0$ such that $b$ is an
function of $a$ in the region $[0,\rho)$ and $b(a)$ is continuous.
Thus $\lim\limits_{a\rightarrow 0+}b(a)=b(0)$.

Next we turn to proving (2). Since $\lambda_{1}>1\geq \lambda_{2}$,
by using (\ref{solution}) we have
\begin{eqnarray*}
\lim_{a\rightarrow0+}f_{a}(x)=f_{0}(x)=\left\{
\begin{array}{l l}
x-q,& x\geq b(0),\\
(b(0)-q)(\frac{x}{b(0)})^{\lambda_{1}},&x<b(0).
\end{array}
\right.
\end{eqnarray*}
Therefore we complete the proof.
\end{proof}
\vskip10pt
 \noindent
\begin{Remark}
Proposition \ref{Prop4.3} shows that the stock loan with automatic
termination clause is consistent with the result given by Xia and
Zhou in \cite{stock} as $a\rightarrow0+$.
\end{Remark}
\vskip 5pt \noindent As a direct consequence of (\ref{solution}),
Propositions \ref{solution_exist}-\ref{solution_exist2} and Theorem
3.1, we get the initial value $ f(S_0)$ of the stock loan with
automatic termination clause as follows. \vskip10pt
 \noindent
\begin{Them}\label{mainresult}
Assume that $\delta>0$ and $\gamma-r+\delta\geq0$ or $\delta=0$ and
$\gamma-r>\frac{\sigma^{2}}{2}$. Define $f$ by (\ref{solution}), $b$
by Proposition \ref{solution_exist} and Proposition
\ref{solution_exist2}. Then the initial value of stock loan with
automatic termination clause is $ f(S_0)$.
\end{Them}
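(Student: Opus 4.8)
The plan is to verify that the explicit function $f$ of (\ref{solution}), with the free boundary $b$ supplied by Proposition \ref{solution_exist} or Proposition \ref{solution_exist2}, satisfies every hypothesis of Theorem \ref{main}; the theorem then identifies this $f$ with the value function (\ref{reward}), whose evaluation at $x=S_{0}$ is by definition the initial value of the stock loan with automatic termination clause. So the proof is an assembly of Sections 3 and 4 rather than a fresh argument.

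First I would fix $b$. Under $\delta>0,\ \gamma-r+\delta\ge0$ Proposition \ref{solution_exist}, and under $\delta=0,\ \gamma-r>\sigma^{2}/2$ Proposition \ref{solution_exist2}, provide a unique root $y^{*}>q/a$ of the smooth-fit equation $g(y^{*})=0$, hence a unique $b=ay^{*}>q$; with this $b$ the piecewise function (\ref{solution}) is precisely the solution of (\ref{equation_ab}) built in Section 3. Its regularity matches the hypothesis of Theorem \ref{main}: $f$ is continuous on $(0,\infty)$ because the interior branch vanishes at $a$ (first line of (\ref{ysolution})) and equals $b-q$ at $b$; it is $C^{2}$ on each of $(0,a)$, $(a,b)$, $(b,\infty)$; and the condition $f'(b)=1$ (third line of (\ref{ysolution})) makes it $C^{1}$ across $b$. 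No smoothness is claimed or needed at $a$.

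Next I would check that $f$ solves the variational problem (\ref{variational}). On $(a,b)$ the linear ODE holds by construction, so it remains to verify $f(x)\ge(x-q)_{+}$. Since $C_{1}>0$ and $C_{2}=-C_{1}a^{\lambda_{1}-\lambda_{2}}$, one has $f(x)=C_{1}x^{\lambda_{2}}\bigl(x^{\lambda_{1}-\lambda_{2}}-a^{\lambda_{1}-\lambda_{2}}\bigr)>0$ for $a<x<b$, which settles the case $a<x\le q$; moreover $0<\lambda_{2}\le1<\lambda_{1}$ forces $f''>0$ on $(a,b)$, so $f$ is convex there, and together with $f(b)=b-q$, $f'(b)=1$ the tangent-line bound gives $f(x)\ge(b-q)+(x-b)=x-q$ for $q<x<b$. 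On $[b,\infty)$, $f(x)=x-q$, so $(x-q)_{+}-f\equiv0$ is attained (as $x>b>q$), while the operator applied to $x-q$ equals $-\delta x+\tilde{r}q\le0$ for $x\ge b$ --- exactly the sign computation already used for $\Lambda$ in the proof of Theorem \ref{main}, via $b>q\ge\frac{r-\gamma}{\delta}q$ when $\delta>0$ and via $\tilde{r}=r-\gamma<0$ when $\delta=0$. Finally $f\equiv0$ on $(0,a]$ as demanded.

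With all hypotheses checked, Theorem \ref{main} gives that $f$ equals (\ref{reward}) and that $\tau_{b}$ is optimal, which is the assertion; the probabilistic computation of Section 4 (Corollary \ref{expectform} with Lemma \ref{twoexpet}) independently recovers the same closed form (\ref{solution}). I expect no real obstacle beyond bookkeeping: the only substantive points --- convexity of $f$ on $(a,b)$ and the sign $\delta x\ge\tilde{r}q$ on $\{x\ge b\}$ --- both rest on $C_{1}>0$, $C_{2}<0$ and $b>q$, which are exactly what Propositions \ref{solution_exist}--\ref{solution_exist2} (and the appendix convexity lemma behind them) deliver.
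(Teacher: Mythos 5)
Your overall strategy is exactly the paper's: the paper gives no separate argument for Theorem \ref{mainresult}, stating it as a direct consequence of the explicit formula (\ref{solution}), Propositions \ref{solution_exist}--\ref{solution_exist2} (existence and uniqueness of $b=ay^{*}>q$), and the verification Theorem \ref{main}. Your write-up supplies the verification details that the paper leaves implicit, and the checks of regularity, of the ODE on $(a,b)$, of $f\equiv x-q$ on $[b,\infty)$ together with the sign $-\delta x+\tilde{r}q\le 0$ there, and of positivity of $f$ on $(a,q]$ via $C_{1}>0$, $C_{2}=-C_{1}a^{\lambda_{1}-\lambda_{2}}$ are all correct.

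There is, however, one step that fails as written: the claim that ``$0<\lambda_{2}\le 1<\lambda_{1}$ forces $f''>0$ on $(a,b)$''. Under the standing hypothesis $\delta>0$, $\gamma-r+\delta\ge 0$ the case $\gamma<r$ is allowed, and then $\lambda_{1}\lambda_{2}=\tfrac{2(\gamma-r)}{\sigma^{2}}<0$, so $\lambda_{2}<0$ (this is exactly the third case treated in Lemma \ref{AL}). In that case the second term of
\begin{equation*}
f''(x)=C_{1}\lambda_{1}(\lambda_{1}-1)x^{\lambda_{1}-2}+C_{2}\lambda_{2}(\lambda_{2}-1)x^{\lambda_{2}-2}
\end{equation*}
is negative (since $C_{2}<0$ and $\lambda_{2}(\lambda_{2}-1)>0$), so convexity is not term-by-term obvious. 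It still holds, but for a different reason: using $C_{2}=-C_{1}a^{\lambda_{1}-\lambda_{2}}$ one finds $f''(x)\ge 0$ for all $x>a$ if and only if $\lambda_{2}(\lambda_{2}-1)\le\lambda_{1}(\lambda_{1}-1)$, which is equivalent to $\lambda_{1}+\lambda_{2}\ge 1$, i.e.\ to $\gamma-r+\delta\ge 0$ by the identity $\lambda_{1}+\lambda_{2}=1+\tfrac{2(\gamma-r+\delta)}{\sigma^{2}}$. So the tangent-line bound $f(x)\ge x-q$ on $(q,b)$ survives, but you must invoke the hypothesis $\gamma-r+\delta\ge 0$ at this point rather than a sign condition on $\lambda_{2}$ that is false in general. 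With that repair the assembly of Sections 3 and 4 goes through as you describe.
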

\vskip 5pt  \noindent
\setcounter{equation}{0}
\section{{\small {\bf Stock
loan with automatic termination clause, cap and margin}}} \vskip
5pt\noindent In this section we add a cap and a margin to stock loan
with automatic termination clause to protect the lender from a large
drop in value, or even default, of the collateral. We will give
explicit formulas for the value function and the optimal exercise
time.\vskip 5pt\noindent Let the stock price S be modeled as in
(\ref{E2.1}). The value of this stock loan with automatic
termination clause, cap and margin is
\begin{eqnarray}\label{reward3}
f(x) &=&\sup\limits_{\tau \in \mathcal {T}_{0}}{\bf E}\big
[e^{-r\tau}(S_{\tau}\wedge Le^{\gamma \tau}-qe^{\gamma
\tau})_{+}I_{\{\tau<\tau_{a}\}}+ke^{-r\tau_{a}}
S_{\tau_{a}}I_{\{\tau_{a}\leq\tau\}}\big ]\nonumber\\
&=&\sup\limits_{\tau \in \mathcal {T}_{0}}{\bf E}\big
[e^{-\tilde{r}\tau}(\tilde{S}_{\tau}\wedge
L-q)_{+}I_{\{\tau<\tau_{a}\}}+ke^{-\tilde{r}
\tau_{a}}\tilde{S}_{\tau_{a}}I_{\{\tau_{a}\leq\tau\}}\big ],
\end{eqnarray}
where $\tilde{r}=r-\gamma$, $\tilde{S}_{t}=e^{-\gamma
t}S_{t},\tilde{S}_{0}=S_{0}=x$, $\mathcal {T}_{t}$ denotes all $\{
\mathcal {F}_{t}\}_{t\geq 0}$ -stopping times $\tau $ with $\tau
\geq t$ a.s., and  $\tau_{a}=\inf{\{t\geq 0,e^{-\gamma t}S_{t} \leq
a\}}$. The terms $L$ and $ kS_{\tau_{a}} $ are called $cap $ and
$margin$  satisfying $ 0<a\leq q<L $ and $ 0\leq k<1 $,
respectively. The value of this stock loan at  any time $t$ is
\begin{eqnarray}
V_{t}=\sup\limits_{\tau \in \mathcal {T}_{t}}{\bf
E}\big[e^{-r(\tau-t)}(S_{\tau}\wedge Le^{\gamma \tau}-qe^{\gamma
\tau})_{+}I_{\{\tau<\tau_{a}\}}+ke^{-r(\tau_{a}-t)}
S_{\tau_{a}}I_{\{\tau_{a}\leq\tau\}}|\mathcal{F}_{t}\big].
\end{eqnarray}
 The
contracts can be described as follows.  The stock loan has
properties as in section 2 and if the stock price falls below the
accrued loan amount, i.e., $e^{-\gamma t}S_{t} \leq a$, then the
lander pays $\theta(t)=kS_{t}$  to the borrower, and the contract is
terminated. \vskip 5pt\noindent Because solving the optimal stoping
problem (\ref{reward3}) is similar to (\ref{reward}), we omit the
details.
\begin{theorem}\label{wuweiming4}
Assume $\delta>0$ or $\delta=0,\gamma-r>\frac{\sigma^{2}}{2}$, and
the  $f(x)$ is continuous and  belongs to $ \mathcal
{C}^{1}[(0,\infty)\setminus \{a, b\wedge L\}]\cap \mathcal
{C}^{2}[(0,\infty)\setminus \{a,b\wedge L\}]$ for some $b\geq 0$.
We have the following.  \\
(1) If  $L> b$ and $f(x)$ solves the following variational
inequality
\begin{eqnarray}\label{equivalent31} \left\{
\begin{array}{l l l l}
g(x)=x\wedge L-q, & x\geq b,\\
\frac{1}{2}\sigma^{2}x^{2}g^{''}+(\tilde{r}-\delta)xg^{'}-\tilde{r}g=0, &a<x<b,\\
g(x)=kx,& x\leq a,\\
g(b)=b-q,f^{'}(b-)=1,g(a)=ka,
\end{array}
\right.
\end{eqnarray}
 then  $f(x)$ must be the  function  defined by (\ref{reward3})
  and  $\tau_{b}(=\inf{\{t\geq
0:e^{-\gamma t}S_{t}\geq b\}})\wedge \tau_{L}( =\inf{\{t\geq
0:e^{-\gamma t}S_{t}\geq L\}}) $
 is optimal in the sense that
$$f(x) ={\bf E}\big
[e^{-r\tau_{b}\wedge \tau_{L} }(S_{\tau_{b}\wedge \tau_{L} }\wedge
Le^{\gamma \tau_{b}\wedge \tau_{L} }-qe^{\gamma \tau_{b}\wedge
\tau_{L} })_{+}I_{\{\tau_{b}\wedge \tau_{L}
<\tau_{a}\}}+ke^{-r\tau_{a}}
S_{\tau_{a}}I_{\{\tau_{a}\leq\tau_{b}\wedge \tau_{L} \}}\big ].$$
(2) If  $L\leq b$ and  $f(x)$ solves the following variational
inequality
\begin{eqnarray}\label{equivalent32}
\left\{
\begin{array}{l l l l}
g(x)= L-q, & x\geq L,\\
\frac{1}{2}\sigma^{2}x^{2}g^{''}+(\tilde{r}-\delta)xg^{'}-\tilde{r}g=0, &a<x<L,\\
g(x)=kx,& x\leq a,\\
g(L)=L-q,g(a)=ka,
\end{array}
\right.
\end{eqnarray}
then $f(x)$ must be the function defined by (\ref{reward3}) and
$\tau_{L}=\inf{\{t\geq 0:e^{-\gamma t}S_{t}\geq L\}}$ is optimal in
the sense that
$$f(x) ={\bf E}\big
[e^{-r\tau_{L}}(S_{\tau_{L}}\wedge Le^{\gamma \tau_{L}}-qe^{\gamma
\tau_{L}})_{+}I_{\{\tau_{L}<\tau_{a}\}}+ke^{-r\tau_{a}}
S_{\tau_{a}}I_{\{\tau_{a}\leq\tau_{L}\}}\big ].$$
\end{theorem}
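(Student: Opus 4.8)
The plan is to transplant the two-step argument of the proof of Theorem~\ref{main} to the present setting, the genuine novelties being the margin payment collected at $\tau_{a}$, which replaces the boundary datum $f(a)=0$ of (\ref{equation_ab}) by $f(a)=ka$, and the second free boundary created by the cap. I would first record, as in Section~3, the elementary bounds $(x\wedge L-q)_{+}\le f(x)\le x$ (obtained by taking $\tau=0$ in (\ref{reward3}) and by the optional-sampling comparison with $e^{-\tilde{r}t}\tilde{S}_{t}$, cf.\ Proposition~\ref{inequality}), the continuity of $f$, and the existence and uniqueness of the exercise boundary $b$ via the analogues of Propositions~\ref{solution_exist}--\ref{solution_exist2}, so that the dichotomy $L>b$ versus $L\le b$ is well defined. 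In case~(1) one has $q<b<L$, so the cap is inactive at $\tau_{b}$; the equation on $(a,b)$ is then the one of Section~3 with $f(a)=ka$ in place of $f(a)=0$, and $\tau_{b}\wedge\tau_{L}=\tau_{b}$ since $\tilde{S}$ is continuous. In case~(2) the exercise boundary drops to $L$ and the datum $f(L)=L-q$ takes the place of $f(b)=b-q$.

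For Step one (the bound $f(x)\ge{\bf E}[\,\cdot\,]$ for an arbitrary $\tau\in\mathcal{T}_{0}$) I would apply It\^{o}'s formula to $e^{-\tilde{r}t}f(\tilde{S}_{t})$ and use (\ref{equivalent31}) (resp.\ (\ref{equivalent32})) to write the finite-variation part as $-d\Lambda(t)$ with $\Lambda$ nonnegative and nondecreasing: the operator $\frac{1}{2}\sigma^{2}x^{2}f''+(\tilde{r}-\delta)xf'-\tilde{r}f$ vanishes on $(a,b)$ (resp.\ $(a,L)$), equals $-(\delta x-\tilde{r}q)\le 0$ on the exercise strip, and equals $-\delta kx\le 0$ for $x<a$, while the kink at $a$ contributes with the favourable sign. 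Stopping at $\tau\wedge\tau_{a}\wedge t$, discarding $\Lambda$, and decomposing according to $\{\tau<\tau_{a}\}$ and $\{\tau_{a}\le\tau\}$, I would use $f\ge(\cdot\wedge L-q)_{+}$ on the first event and the identity $f(\tilde{S}_{\tau_{a}})=f(a)=ka=k\tilde{S}_{\tau_{a}}$ on the second, obtaining a lower bound for $f(\tilde{S}_{0})$ split into the exercise term, the margin term, and a residual term ${\bf E}[e^{-\tilde{r}t}f(\tilde{S}_{t})I_{\{\tau_{a}\le\tau\}}I_{\{\tau_{a}>t\}}]$. Letting $t\to\infty$: the exercise term converges by dominated convergence, the dominating variable being $\sup_{t}e^{-\tilde{r}t}(\tilde{S}_{t}-q)_{+}$, which is integrable by Lemma~3.1 of \cite{stock} when $\delta=0$ and $\gamma-r>\sigma^{2}/2$ and, when $\delta>0$, even without $\gamma-r+\delta\ge 0$, since the cap gives $e^{-\tilde{r}t}(\tilde{S}_{t}\wedge L-q)_{+}\le(x/q)e^{-(\delta+\sigma^{2}/2)t+\sigma\mathcal{W}_{t}}$; the margin term is bounded by $ka\,{\bf E}[e^{-\tilde{r}\tau_{a}}]<\infty$, the finiteness following from the explicit Laplace transform of $\tau_{a}$ together with $\mu^{2}-2(\gamma-r)=(\frac{\sigma}{2}-\frac{\gamma-r+\delta}{\sigma})^{2}+2\delta\ge 0$; and the residual term vanishes exactly as in (\ref{holder1})--(\ref{convergence3}), using the density of $\tau_{a}$ to bound ${\bf E}[I_{\{\tau_{a}>t\}}]$ by $\alpha e^{-\mu^{2}t/2}$ and applying H\"{o}lder's inequality against ${\bf E}[(e^{-\tilde{r}t}\tilde{S}_{t})^{1+\epsilon}]$.

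For Step two I would put $\tau=\tau_{b}\wedge\tau_{L}$ in case~(1) and $\tau=\tau_{L}$ in case~(2): along that path $\tilde{S}$ stays in $[a,b)$ (resp.\ $[a,L)$) up to the stopping time, so $\Lambda\equiv 0$ there, while $f(\tilde{S}_{\tau_{b}})=b-q$, $f(\tilde{S}_{\tau_{L}})=L-q$ and $f(\tilde{S}_{\tau_{a}})=k\tilde{S}_{\tau_{a}}$; the residual term again tends to $0$, so the Step-one inequality becomes an equality and the indicated stopping time is optimal, which is the assertion of the theorem.

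I expect Step one, and in particular the passage to the limit, to be the main obstacle, the new ingredient being the margin term: its finiteness is not automatic and rests on the nonnegativity of $\mu^{2}-2(\gamma-r)$ and on the closed form of the Laplace transform of the hitting time $\tau_{a}$. A secondary point needing care is that the It\^{o} decomposition is invoked only on the region actually visited before exercise --- on $(0,b)\setminus\{a\}$ in case~(1), on $(0,L)\setminus\{a\}$ in case~(2) --- while the exercise region $\{x\ge b\}$ (resp.\ $\{x\ge L\}$) enters only through the pointwise inequality $f(x)=x\wedge L-q\ge(x\wedge L-q)_{+}$; the localization should therefore be arranged so that the sign of the generator is used only where it is favourable.
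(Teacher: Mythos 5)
The paper gives no proof of this theorem at all --- it only remarks that the argument is ``similar to'' that of Theorem \ref{main} and omits the details --- and your proposal is a faithful reconstruction of that intended route: the same two-step It\^{o}/supermartingale verification, with the boundary datum $ka$ replacing $0$ at $x=a$, and with the genuinely new technical points correctly isolated (integrability of the margin term via the identity $\mu^{2}-2(\gamma-r)=(\frac{\sigma}{2}-\frac{\gamma-r+\delta}{\sigma})^{2}+2\delta\geq0$ and the Laplace transform of $\tau_{a}$; the dominated-convergence and H\"{o}lder estimates of (\ref{holder1})--(\ref{convergence3}) carried over verbatim). One small inaccuracy is harmless: the kink at $a$ does not ``contribute with the favourable sign'' --- it contributes nothing, since the process started above $a$ accumulates no local time at $a$ before $\tau_{a}$.

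There is, however, one step that fails, precisely where you write that the generator ``equals $-(\delta x-\tilde{r}q)\le0$ on the exercise strip.'' That formula comes from $g(x)=x-q$ and is therefore valid only on $b\leq x\leq L$. On $\{x>L\}$ --- which is the entire exercise region in case (2) --- the hypotheses (\ref{equivalent31}) and (\ref{equivalent32}) force $g(x)=L-q$ to be constant, whence
\begin{equation*}
\tfrac{1}{2}\sigma^{2}x^{2}g''+(\tilde{r}-\delta)xg'-\tilde{r}g=-\tilde{r}(L-q)=(\gamma-r)(L-q),
\end{equation*}
which is \emph{nonnegative} when $\gamma>r$ (the case the paper assumes throughout): $e^{-\tilde{r}t}g(\tilde{S}_{t})$ is a submartingale, not a supermartingale, while $\tilde{S}$ sits above $L$. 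In Step one $\tau$ is arbitrary, so you cannot localize to $(0,b\wedge L)$ as your closing remark suggests --- the path does visit $\{x>L\}$ before $\tau\wedge\tau_{a}\wedge t$ --- and the inequality $f(x)\geq{\bf E}[\cdot]$ is not established for such $\tau$. Nor is this a presentational issue one can patch: for $x>L$ and $\gamma>r$, waiting a short deterministic time $s$ yields approximately $e^{(\gamma-r)s}(L-q)>L-q$, so $\tau_{L}$ is not optimal above the cap and the constant $L-q$ cannot be the value function there. The defect is inherited from the paper, whose own explicit formula (\ref{solution11}) assigns $(L-q)(x/L)^{\lambda_{2}}\neq L-q$ on $\{x\geq L\}$, contradicting (\ref{equivalent31}); but a correct proof must either modify the variational inequality above $L$ (so that the generator has the right sign there) or restrict the assertion to initial states $x<b\wedge L$, and your write-up does neither.
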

\vskip 5pt\noindent
 If  $\delta>0$ or
$\delta=0,\gamma-r>\frac{\sigma^{2}}{2}$ and $0\leq k\leq
h(\frac{q}{a})$, it is easy to see that there exists a unique $y^*$
solving the following equation
\begin{eqnarray}\label{sovle_b1}
(\lambda_{1}-1)y^{\lambda_{1}+1}-\frac{q}{a}\lambda_{1}y^{\lambda_{1}}+
(1-\lambda_{2})y^{\lambda_{2}+1}+\frac{q}{a}\lambda_{2}y^{\lambda_{2}}
-
k(\lambda_{1}-\lambda_{2})y^{\lambda_{1}+\lambda_{2}}=0,\nonumber\\
\end{eqnarray}
where $h(y)=\frac{\lambda_{1}+1-\lambda_{2}}
{\lambda_{1}}y^{1-\lambda_{2}}-\frac{q}{a}
\frac{\lambda_{1}-1-\lambda_{2}}{\lambda_{1}-1}y^{-\lambda_{2}}$.
\vskip 5pt\noindent Let  $b=ay^{*}>q$.  Solving (\ref{equivalent31})
and (\ref{equivalent32}) we get explicit expression of $g(x)$ as
following.\\
If $L\geq b $ then
\begin{eqnarray}\label{solution11}
g(x)=\left\{
\begin{array}{l l l}
kx,& x\leq a,\\
\frac{ka}{C(a,b)}(a^{\frac{\mu}{\sigma}}b
^{\frac{\sqrt{\mu^{2}-2\lambda}}{\sigma}}x^{\lambda_{2}}
-a^{\frac{\mu}{\sigma}}
b^{\frac{-\sqrt{\mu^{2}-2\lambda}}{\sigma}}x^{\lambda_{1}})+\\
\frac{b-q}{C(a,b)}(b^{\frac{\mu}{\sigma}}
a^{\frac{-\sqrt{\mu^{2}-2\lambda}}{\sigma}}x^{\lambda_{1}}
-b^{\frac{\mu}{\sigma}}a^{\frac{\sqrt{\mu^{2}-2\lambda}}
{\sigma}}x^{\lambda_{2}}),&a<x<b.\\
x-q,&  b\leq x\leq  L,\\
(L-q)(\frac{x}{L})^{\lambda_2}, & x\geq L.
\end{array}
\right.
\end{eqnarray}
If $ L< b$ then
\begin{eqnarray}\label{solution12}
g(x)=\left\{
\begin{array}{l l l}
kx,& x\leq a,\\
\frac{ka}{C(a,L)}(a^{\frac{\mu}{\sigma}}L
^{\frac{\sqrt{\mu^{2}-2\lambda}}{\sigma}}x^{\lambda_{2}}
-a^{\frac{\mu}{\sigma}}
L^{\frac{-\sqrt{\mu^{2}-2\lambda}}{\sigma}}x^{\lambda_{1}})+\\
\frac{L-q}{C(a,L)}(L^{\frac{\mu}{\sigma}}
a^{\frac{-\sqrt{\mu^{2}-2\lambda}}{\sigma}}x^{\lambda_{1}}
-L^{\frac{\mu}{\sigma}}a^{\frac{\sqrt{\mu^{2}-2\lambda}}
{\sigma}}x^{\lambda_{2}}),&a<x<L,\\
 (L-q)(\frac{x}{L})^{\lambda_2}
& x\geq L.
\end{array}
\right.
\end{eqnarray}
where $C(a,b)=(\frac{b}{a})^{\frac{\sqrt{\mu^{2}-2\lambda}}{\sigma}}
-(\frac{a}{b})^{\frac{\sqrt{\mu^{2}-2\lambda}}{\sigma}}$, $x=S_{0}$,
$\lambda=\gamma-r$, $ \lambda_1 $ and $ \lambda_2 $  are defined by
 (\ref{solve_lumda}).
 \vskip 5pt\noindent
Since the $g(x)$ above  belongs to $ \mathcal
{C}^{1}[(0,\infty)\setminus \{a, b\wedge L\}]\cap \mathcal
{C}^{2}[(0,\infty)\setminus \{a,b\wedge L\}]$ for some $b\geq 0$ and
solve (\ref{equivalent31}) and (\ref{equivalent32}), by theorem
\ref{wuweiming4} we get main result of this section as following.
\vskip 10pt\noindent
\begin{theorem}\label{wuweiming5}
Assume that  $\delta>0$ or $\delta=0,\gamma-r>\frac{\sigma^{2}}{2}$
and  $0\leq k\leq h(\frac{q}{a})$. Then the value of stock loan with
automatic termination clause, cap and margin is given by (
\ref{solution11}) and (\ref{solution12}). Moreover, if $L> b$ then
the stopping time $ \tau_b\wedge \tau_L$ is the optimal exercise
time. If $L\leq b $ then $ \tau_L$ is the optimal exercise time.
\end{theorem}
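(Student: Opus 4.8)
The plan is to derive Theorem~\ref{wuweiming5} as an immediate consequence of Theorem~\ref{wuweiming4}. Concretely, I would show that the explicit functions in (\ref{solution11}) and (\ref{solution12}) qualify as admissible candidates for Theorem~\ref{wuweiming4} — that they are continuous, have the stated $\mathcal C^{1}$/$\mathcal C^{2}$ regularity off the exceptional set $\{a,b\wedge L\}$, and solve the variational systems (\ref{equivalent31}) and (\ref{equivalent32}) — and then quote the conclusions of Theorem~\ref{wuweiming4} to read off both the value $f=g$ and the optimal exercise time, namely $\tau_{b}\wedge\tau_{L}$ when $L>b$ and $\tau_{L}$ when $L\le b$. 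The two cases are treated in parallel, the case $L\le b$ being the case $L>b$ with the free boundary $b$ collapsed onto the cap $L$ and the smooth-fit condition at the top endpoint relaxed to mere continuity.

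First I would pin down the boundary $b$. Under $\delta>0$ one has $\lambda_{1}>1>\lambda_{2}>0$ (the roots in (\ref{solve_lumda}) are real, distinct and positive, with $1$ strictly between them), and under $\delta=0,\ \gamma-r>\frac{\sigma^{2}}{2}$ one has $\lambda_{1}>1=\lambda_{2}$. In either case I claim (\ref{sovle_b1}) admits a unique root $y^{*}>\frac{q}{a}$, so that $b:=ay^{*}>q$ is well defined; the argument copies Propositions~\ref{solution_exist} and~\ref{solution_exist2}: multiply the left side of (\ref{sovle_b1}) by $y^{-\lambda_{2}}$ to obtain a function $\tilde g$, compute $\tilde g''$, invoke the convexity lemma of the appendix (Lemma~\ref{AL}) to see that $\tilde g$ is convex on $(\frac{q}{a},\infty)$, and combine $\tilde g(\frac{q}{a})\le 0$ — this is exactly where the standing bound $0\le k\le h(\frac{q}{a})$ enters — with $\tilde g(y)\to\infty$ as $y\to\infty$ to force a single zero. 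Once $b$ is fixed, $C_{1}$ and $C_{2}$ are the unique constants making the interior Euler solution equal to $ka$ at $a$ and to $b-q$ at $b$, i.e.\ the coefficients displayed in (\ref{solution11})–(\ref{solution12}).

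Next I would verify that $g$ in (\ref{solution11}) (resp.\ (\ref{solution12})) has the properties required by Theorem~\ref{wuweiming4}. On each open piece $g$ is either linear ($kx$ or $x-q$) or a combination of the Euler powers $x^{\lambda_{1}},x^{\lambda_{2}}$, hence $\mathcal C^{\infty}$ there, and on the interior band $(a,b)$ (resp.\ $(a,L)$) it solves the Euler equation by the very definition of $\lambda_{1},\lambda_{2}$; continuity at $a$ and $b\wedge L$ and the identities $g(a)=ka$, $g(b)=b-q$ (resp.\ $g(L)=L-q$) are built into the choice of $C_{1},C_{2}$; the only non-obvious matching is $\mathcal C^{1}$ across $b$ when $L>b$, i.e.\ $g'(b-)=1$, and this is precisely the equation (\ref{sovle_b1}) solved by $y^{*}$, so it is automatic (at $a$ a corner is allowed, $a$ being in the exceptional set). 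It then remains to check that the tail piece $(L-q)(x/L)^{\lambda_{2}}$ matches the interior solution at $x=L$ and, since $\lambda_{2}\in(0,1]$ and $\gamma>r$, stays above the capped payoff $L-q$ for $x>L$ while solving the Euler equation there, and that the obstacle-type inequalities implicit in (\ref{equivalent31})–(\ref{equivalent32}) — $g(x)\ge (x\wedge L-q)_{+}$ on the continuation band and $g(x)\ge kx$ near $a$ — hold; these follow from the convexity and monotonicity of $g$ together with $b>q$, exactly as the corresponding facts were produced in Sections~3 and~4. With every hypothesis checked, Theorem~\ref{wuweiming4}(1) gives the statement when $L>b$ and Theorem~\ref{wuweiming4}(2) gives it when $L\le b$.

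I expect the genuine work to sit in two places. The convexity step for $\tilde g$ is not a formality: the margin contributes the extra term $-k(\lambda_{1}-\lambda_{2})y^{\lambda_{1}+\lambda_{2}}$ in (\ref{sovle_b1}), so $\tilde g''$ must be recomputed from scratch, and it is this computation that forces the threshold $h(\frac{q}{a})$ — precisely the sort of delicate estimate flagged in the remark following Lemma~\ref{twoexpet}. Secondly, reading (\ref{equivalent31})–(\ref{equivalent32}) with their usual obstacle inequalities, verifying that $g$ dominates the running payoff everywhere and that the Euler operator applied to $g$ is nonpositive on the stopping set $[b,\infty)$ (resp.\ $[L,\infty)$) needs the full strength of the hypotheses $\delta>0$ or $\delta=0,\gamma-r>\frac{\sigma^{2}}{2}$ and of $\lambda_{2}\in(0,1]$; the rest — continuity, the Euler ODE, the boundary identities, and the final appeal to Theorem~\ref{wuweiming4} — is bookkeeping.
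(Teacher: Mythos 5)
Your proposal follows the paper's route exactly: the paper likewise proves Theorem~\ref{wuweiming5} by observing that the explicit functions (\ref{solution11})--(\ref{solution12}) have the stated regularity and solve the variational systems (\ref{equivalent31})--(\ref{equivalent32}), then invoking Theorem~\ref{wuweiming4}, with the existence and uniqueness of $b=ay^{*}$ from (\ref{sovle_b1}) obtained by the same convexity argument as in Propositions~\ref{solution_exist}--\ref{solution_exist2} and Lemma~\ref{AL}. You actually supply more of the verification (smooth fit at $b$, the obstacle inequalities, and the precise role of $0\le k\le h(\frac{q}{a})$ in keeping $\tilde g$ convex) than the paper, which merely asserts these facts.
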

\vskip 10pt\noindent
\begin{Remark}
The pricing model (\ref{reward}) or  (\ref{wuweiming5}) resembles
that of American barrier options in mathematical form. If the
pricing model(\ref{reward}) or  (\ref{wuweiming5})has no negative
interest rate, cap and margin constraints, it will  become one of
American barrier options. So the approaches to deal with the pricing
model (\ref{wuweiming5}) and usual American barrier options are very
different because of these constraints. A mathematically oriented
discussion of the barrier option pricing problem is contained in
Rich \cite{Rich}(1994). In general, there are following several
approaches to barrier option pricing: (a) the probabilistic method,
see Kunitomo and Ikeda\cite{KI} (1992), and Mijatovi
\cite{MIJ}(2010); (b) the Laplace Transform technique, see Pelsser
\cite{PE}(2000), Fusai \cite{FU}(2001); (c) the Black-Scholes PDE,
which can be solved using separation of variables, see Hui et
al.\cite{HUI} (2000),Zvan et al.\cite{ZVF}(2000), and
Boyarchenko\cite{BL}(2002) or finite difference schemes and
interpolation, see Boyle and Tian (1998), Sanfelici\cite{S}(2004),
Fusaia and Recchioni\cite{FR}( 2007) and Avrama et
al.\cite{ACU}(2002).; (d) binomial and trinomial trees see Boyle and
Lau \cite{BLA}(1994), Gao, Huang and Subrahmanyam\cite{GHS}(2000);
(e) Monte Carlo simulations with various enhancements, see Baldi et
al. \cite{BCI}(1998), Kudryavtsev and Levendorski\cite{KS} 2009; (f)
variational inequality approach, see Karatzas and  Wang\cite{KW}(
2000 ).
\end{Remark}
\setcounter{equation}{0}
\section{{\small {\bf Ranges of fair values of the parameters}}}
\vskip 5pt \noindent In this section we only work out a ranges of
fair values of the parameters $(q, c, \gamma,a )$ and find
relationships among $q, c, \gamma$ and $a$ based on Theorem
\ref{wuweiming5} and equality $f(S_0)=S_0-q+c$  for stock loan with
automatic termination clause, cap and margin. Another  one can be
similarly treated. Under $\delta>0$ or
$\delta=0,\gamma-r>\frac{\sigma^{2}}{2}$ and $0\leq k\leq
h(\frac{q}{a})$. We distinguish three cases, i.e., $S_{0}\leq a$,
$S_{0}\geq b$ and $a< S_{0}<b$.
 \vskip 15pt
\noindent {\bf Case of  $S_{0}\leq a$}. \quad  By (\ref{solution11})
and $f(S_{0})=S_{0}-q+c$, it has to satisfy $S_{0}-q+c=kS_0$ and so
$c=kS_0 +q-S_{0}$. Since $S_{0}\leq a$, the stock loan is terminated
at the initial time. In this case, the client just sells the stock
to the bank at the initial. The client is reluctant to lose equity
position, hence there is no transaction between the client and the
bank actually. \vskip 15pt

\noindent {\bf Case of  $S_{0}\geq b\wedge L$}. \ The initial value
is $f(S_{0})=S_{0}-q +c$. In order to have $f(S_{0})=S_{0}-q+c$, by
(\ref{solution11}) or (\ref{solution12}),  it must have $S_{0}\wedge
L-q =S_{0}-q+c$. So $c$ must be zero, which means that  the bank
does not charge a service fee for its service since the stock price
is large. By Theorem \ref{wuweiming5} the terminable stopping time
is $\tau_{a}\wedge\tau_{b}=\tau_{b}=0,S_{0}\geq b$. The bank and the
client do not have enough incentive to do the business.\vskip 15pt

\noindent {\bf Case of  $a< S_{0}<b\wedge L$}.\quad In this case
both the client and the bank have incentives to do the business. The
bank does since there is dividend payment and so does the client
since the initial stock price is neither very high nor too low to
trigger the automatic termination clause. By Theorem
\ref{mainresult} the initial value is $f(S_{0})$. Then the bank can
charge an amount $c=f(S_{0})-S_{0}+q$ for its service from the
client. The fair value of the parameters $\gamma,q$, $c$ and $a$
should be such that
\begin{eqnarray}\label{value_determine}
S_{0}-q+c=\frac{ka}{C(a,b\wedge L)}(a^{\frac{\mu}{\sigma}}(b\wedge L
) ^{\frac{\sqrt{\mu^{2}-2\lambda}}{\sigma}}x^{\lambda_{2}}
-a^{\frac{\mu}{\sigma}}
(b\wedge L)^{\frac{-\sqrt{\mu^{2}-2\lambda}}{\sigma}}x^{\lambda_{1}})+\nonumber\\
\frac{b\wedge L-q}{C(a,b\wedge L)}((b\wedge L)^{\frac{\mu}{\sigma}}
a^{\frac{-\sqrt{\mu^{2}-2\lambda}}{\sigma}}x^{\lambda_{1}} -(b\wedge
L)^{\frac{\mu}{\sigma}}a^{\frac{\sqrt{\mu^{2}-2\lambda}}
{\sigma}}x^{\lambda_{2}})
\end{eqnarray}
and the terminable stopping time is
$\tau_{a}\wedge\tau_{b}\wedge\tau_L$ for $a<S_{0}< b\wedge L$.
\vskip 15pt \noindent We determine the fair values by the following
steps: \vskip 5pt \noindent {\sl Step 1 }. \ Determine  the values
$q,a,\gamma$, $k$, $L$ in contract  by negotiation between  the bank
and the client. \vskip 5pt \noindent {\sl Step 2 }. \ compute $b$ by
(\ref{sovle_b1}).\vskip 5pt \noindent
 {\sl Step 3 }. \ Determine service fee $c$ by
(\ref{value_determine}).
 \setcounter{equation}{0}
\section{{\small {\bf Numerical results}}}
\vskip 5pt \noindent In this section we  first consider a stock loan
contract with an automatic termination clause $a$ ($a\in[0,q]$),
$r=0.05,\gamma=0.07,\sigma=0.15,\delta=0.01,q=100$ and $S_{0}=100$.
We will give six numerical examples to show that how the liquidity,
optimal strategy $ b(a)$, initial value $f_a(x)$ and initial cash
$q-c$ depend on automatic termination clause $a$, respectively.
\begin{example}
We see from graph\ref{qca} below that the liquidity obtained with
automatic termination clause is larger than the circumstance without
the automatic termination clause.  When the initial stock price
$S_{0}=100$ and $a=100$, the client just sell the stock to the bank
by the stock loan contract with automatic termination clause. \vskip
15pt \noindent
\begin{figure}[H]
  \includegraphics[width=0.8\textwidth]{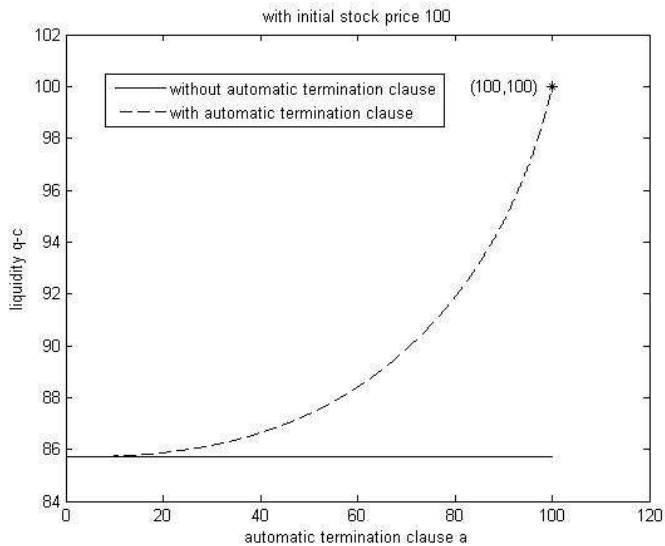}
  \caption{$\gamma=0.07,r=0.05,\sigma=0.15,
  \delta=0.01,q=100,S_{0}=100$}\label{qca}
\end{figure}
\end{example}
\begin{example} We see from graph \ref{bak} below that $b$ is
 an function of $a$. Both the client and the bank will take the deal
when the initial stock price is in between $a$ and $b(a)$. The
client can determine the strategy with automatic termination clause
$a$. The exercise frontier $b(a)$ is decreasing  with respect to
$a$.
\begin{figure}[H]
  \includegraphics[width=0.8\textwidth]{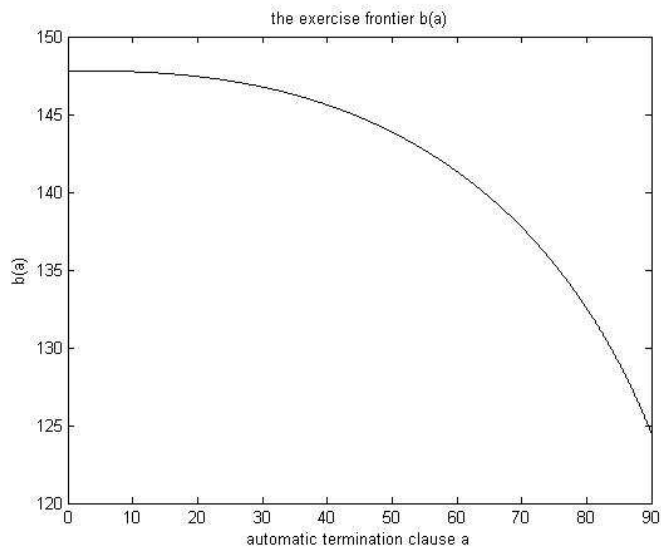}
  \caption{$\gamma=0.07,r=0.05,\sigma=0.15,\delta=0.01,q=100$}\label{bak}
\end{figure}
 \end{example}
\begin{example} The graph \ref{fag} below is a graph of initial value
 $f_{a}(x)$ of the stock loan with
different automatic termination clause $(a=80,60,40,1)$. We see from
the graph that the initial value $f_{a}(x)$ is decreasing w.r.t.
$a$. Since  $c=f(S_{0})-S_{0}+q$, $c$ is also decreasing w.r.t. $a$.
This fact is consistent with the bank can reduce  risk by
introducing an automatic termination clause into the stock loan
contract (see graph \ref{qca}).
\begin{figure}[H]
  \includegraphics[width=0.8\textwidth]{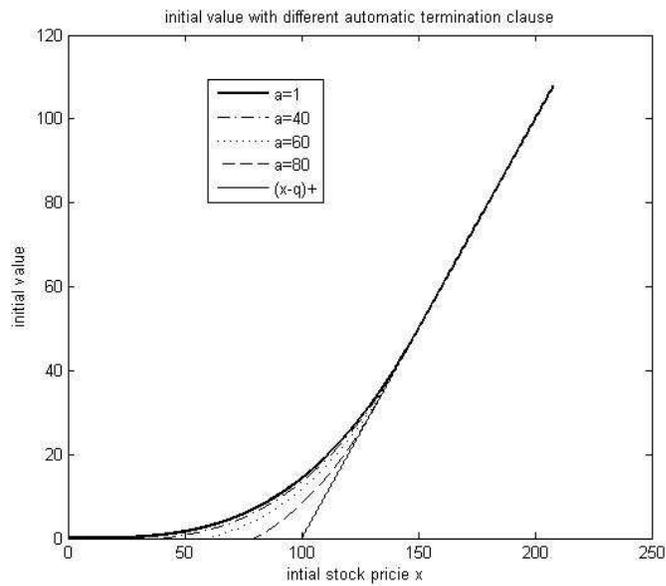}
  \caption{$\gamma=0.07,r=0.05,\sigma=0.15,\delta=0.01,q=100,x=S_{0}$ }\label{fag}
\end{figure}
\end{example}
 \begin{example}
From graph \ref{qc} below  we see  that the initial cash $q-c$ is
increasing  with respect to initial stock price on $[a,b(a)]$. When
the initial stock price is less than $a$, the client just sells the
stock to the bank by the stock loan contract, the bank have no
interest to do the business. In fact  there is no transaction
between the bank and the client.
\begin{figure}[H]
  \includegraphics[width=0.8\textwidth]{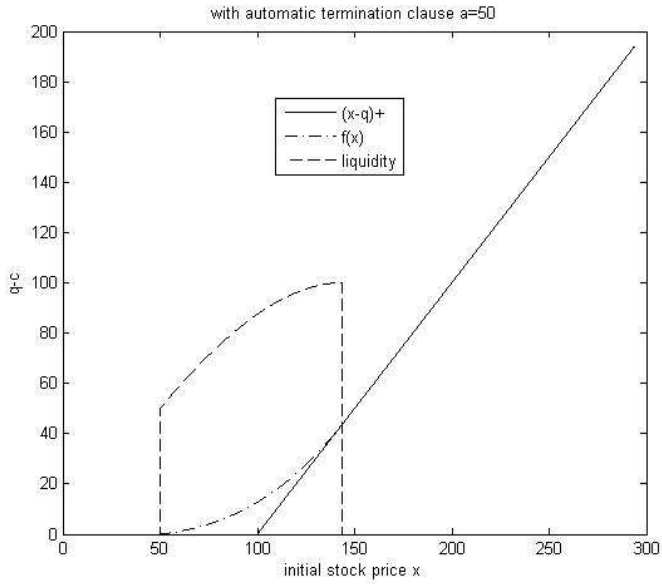}
  \caption{$\gamma=0.07,r=0.05,\sigma=0.15,
  \delta=0.01,q=100,a=50,x=S_{0}$ }\label{qc}
\end{figure}
\end{example}
\vskip 5pt\noindent
 Then we  consider a stock loan contract with automatic
termination clause $a$, cap $L$ and margin $k$.
\begin{example}
The graph \ref{bak} below shows that the  function $b(a, k)$. We see
that for a given contract the client can choose the optimal excise
time.
\begin{figure}[H]
  \includegraphics[width=0.8\textwidth]{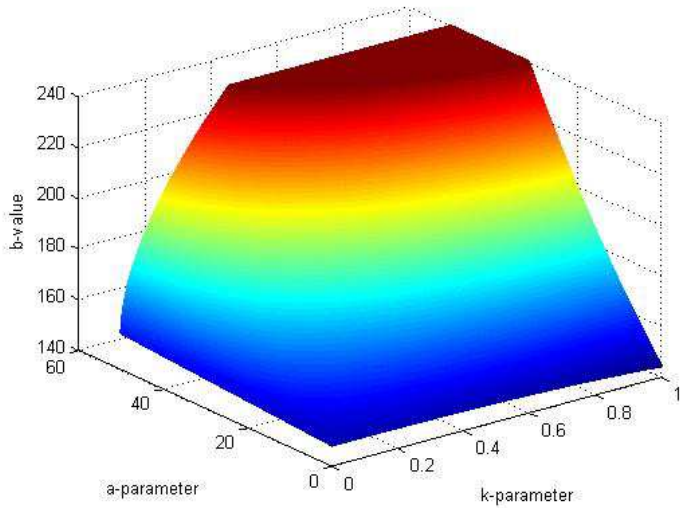}
  \caption{$\gamma=0.07,r=0.05,\sigma=0.15,
  \delta=0.01,q=100,a=10,k=0.5,L=240$}\label{bak}
\end{figure}
\end{example}
\begin{example}
The graphs \ref{fa} and \ref{fa1} below show that the  function
$f_a(x)$. Comparison of the two graphs show the client can get more
flexibility by lower cost.
\begin{figure}[H]
  \includegraphics[width=0.8\textwidth]{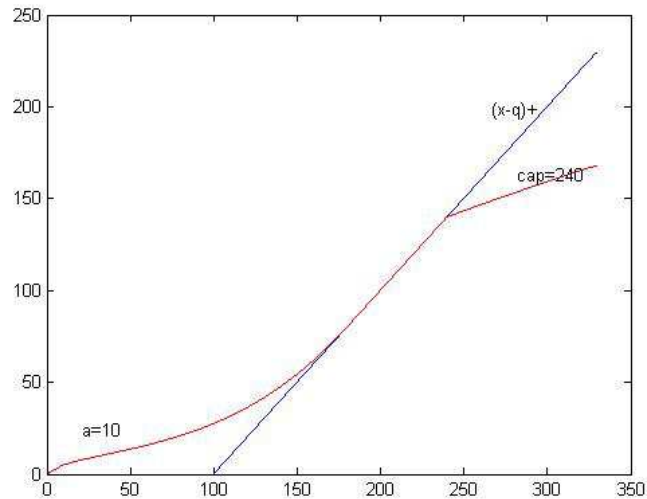}
  \caption{$\gamma=0.07,r=0.05,\sigma=0.15,
  \delta=0.01,q=100,a=10,k=0.5,cap=240$ }\label{fa}
\end{figure}
\begin{figure}[H]
  \includegraphics[width=0.8\textwidth]{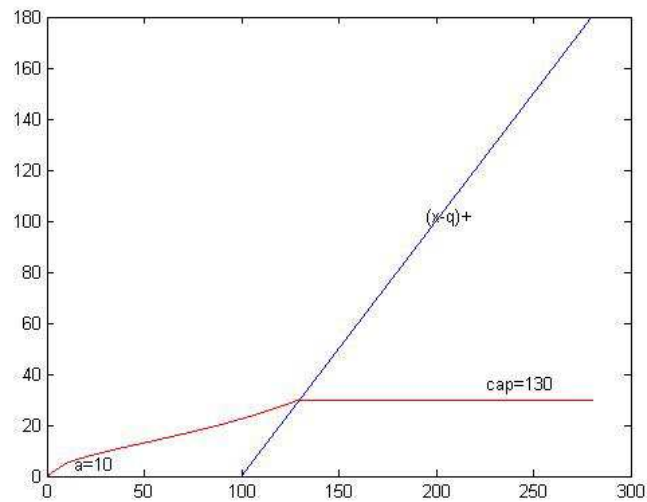}
  \caption{$\gamma=0.07,r=0.05,\sigma=0.15,
  \delta=0.01,q=100,a=10,k=0.5,cap=130$ }\label{fa1}
\end{figure}
\end{example}

 \setcounter{equation}{0}
\section{{\small {\bf Conclusion}}}
\vskip 5pt \noindent In this paper, based on practical transactions
between a bank and a client, we have established a mathematical
model for stock loan with an  automatic termination clause, cap and
margin. The model can be considered a  generalized perpetual
American contingent claim  with possibly negative interest rate. We
have shown that variational inequality method can solve this kind of
stock loans. Using the variational inequality method we have been
able to derive explicitly the value of such a loan, ranges of fair
values of other key parameters, relationships among the key
parameters, and the optimal terminable exercise times. Moreover, we
have checked that the clause $a$, cap $L$ and margin $k$ are
important factors in a stock loan contract by numerical results in
examples 1-6.
 \setcounter{equation}{0}
\section{{\small {\bf Appendix}}}
\vskip 5pt \noindent
\begin{Lemma}\label{AL}
If $\delta>0$ and $\gamma-r+\delta \geq0$, then $\tilde{g}(y)$ is
convex in the region $[\frac{q}{a},\infty)$.
\end{Lemma}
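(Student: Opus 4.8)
The plan is to verify directly that $\tilde g''(y)\ge 0$ for every $y\ge\frac qa$, which is exactly convexity of $\tilde g$ on $[\frac qa,\infty)$. Starting from the expression for $\tilde g''$ recorded just before the statement, I factor out the strictly positive quantity $(\lambda_1-\lambda_2)\,y^{\lambda_1-\lambda_2-2}$ — note $\lambda_1>1>\lambda_2$ here because $\delta>0$, as observed right after (3.6), so $\lambda_1-\lambda_2>0$ and $y^{\lambda_1-\lambda_2-2}>0$. After this factorization the sign of $\tilde g''(y)$ equals the sign of the affine function
\[
P(y):=(\lambda_1-1)(\lambda_1+1-\lambda_2)\,y-\frac qa\,\lambda_1(\lambda_1-\lambda_2-1).
\]
Its leading coefficient $(\lambda_1-1)(\lambda_1+1-\lambda_2)$ is positive, since $\lambda_1-1>0$ and $\lambda_1+1-\lambda_2=(\lambda_1-\lambda_2)+1>0$. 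Hence $P$ is increasing on $[\frac qa,\infty)$, and it suffices to check $P\!\left(\frac qa\right)\ge 0$.

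Next I evaluate $P\!\left(\frac qa\right)=\frac qa\big[(\lambda_1-1)(\lambda_1+1-\lambda_2)-\lambda_1(\lambda_1-\lambda_2-1)\big]$ and simplify the bracket. Expanding both products, the $\lambda_1^{2}$ and $\lambda_1\lambda_2$ terms cancel and one is left with the clean identity
\[
(\lambda_1-1)(\lambda_1+1-\lambda_2)-\lambda_1(\lambda_1-\lambda_2-1)=\lambda_1+\lambda_2-1,
\]
so that $P\!\left(\frac qa\right)=\frac qa\,(\lambda_1+\lambda_2-1)$. It then remains to determine the sign of $\lambda_1+\lambda_2-1$. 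Since $\lambda_1,\lambda_2$ are the two roots of the characteristic equation $\frac12\sigma^{2}\lambda^{2}+(\tilde r-\delta-\frac12\sigma^{2})\lambda-\tilde r=0$ of the Euler equation in (3.5) (equivalently, from (3.6) with $\tilde r=r-\gamma$), Vieta's formula gives $\lambda_1+\lambda_2=1+\dfrac{2(\gamma-r+\delta)}{\sigma^{2}}$, whence $\lambda_1+\lambda_2-1=\dfrac{2(\gamma-r+\delta)}{\sigma^{2}}\ge 0$ precisely under the hypothesis $\gamma-r+\delta\ge 0$. Therefore $P\!\left(\frac qa\right)\ge 0$, so $P(y)\ge 0$ for all $y\ge\frac qa$, hence $\tilde g''(y)\ge 0$ there and $\tilde g$ is convex on $[\frac qa,\infty)$.

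There is essentially no analytic obstacle; the proof is elementary once the right reduction is made. The three things to get right are: (i) reducing the second-derivative inequality to the affine factor $P$; (ii) spotting the cancellation that yields the identity $(\lambda_1-1)(\lambda_1+1-\lambda_2)-\lambda_1(\lambda_1-\lambda_2-1)=\lambda_1+\lambda_2-1$; and (iii) reading $\lambda_1+\lambda_2$ off the characteristic equation to see that $\gamma-r+\delta\ge 0$ is exactly what makes $\lambda_1+\lambda_2-1\ge 0$. The one point requiring care is that the leading coefficient of $P$ must be positive so that the minimum of $P$ over $[\frac qa,\infty)$ is attained at the left endpoint, and this rests on $\lambda_1>1>\lambda_2$, which holds because $\delta>0$; the borderline case $\delta=0$ (where $\lambda_2=1$) needs a separate, easier argument but falls outside this lemma's hypotheses.
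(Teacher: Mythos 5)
Your proof is correct, and it takes a genuinely different (and cleaner) route than the paper's. The paper factors $\tilde g''(y)=(\lambda_1-\lambda_2)\lambda_1(\lambda_1-1)y^{\lambda_1-2}h(y)$ with $h(y)=\frac{\lambda_1+1-\lambda_2}{\lambda_1}y^{1-\lambda_2}-\frac{q}{a}\frac{\lambda_1-1-\lambda_2}{\lambda_1-1}y^{-\lambda_2}$, and then proves $h\ge 0$ on $[\frac qa,\infty)$ by a three-way case analysis on the sign of $\lambda_2$ (i.e.\ on $\gamma>r$, $\gamma=r$, $\gamma<r$ with $\gamma-r+\delta\ge 0$), studying $h'$ separately in each case. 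You instead pull out the positive factor $(\lambda_1-\lambda_2)y^{\lambda_1-\lambda_2-2}$, which leaves the affine function
\begin{equation*}
P(y)=(\lambda_1-1)(\lambda_1+1-\lambda_2)\,y-\tfrac qa\,\lambda_1(\lambda_1-\lambda_2-1),
\end{equation*}
whose positive slope reduces everything to the endpoint value $P(\frac qa)=\frac qa(\lambda_1+\lambda_2-1)=\frac{2q(\gamma-r+\delta)}{a\sigma^{2}}\ge 0$; the identity $\lambda_1+\lambda_2=1+\frac{2(\gamma-r+\delta)}{\sigma^2}$ is the same Vieta relation the paper invokes only in its third case. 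Your argument unifies the paper's three cases into one line and makes transparent that the hypothesis $\gamma-r+\delta\ge 0$ is exactly the condition $\lambda_1+\lambda_2\ge 1$; the only thing the paper's longer route additionally delivers is the explicit nonnegativity of the function $h$ itself, which is reused later (the margin constraint $0\le k\le h(\frac qa)$ in Section 5), but that is not part of the statement being proved here. Your closing remarks about needing $\lambda_1>1>\lambda_2$ for the endpoint argument, and about the excluded case $\delta=0$, are also accurate.
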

\vskip5pt
 \noindent
\begin{proof} \  It follows from  proof of Proposition
\ref{solution_exist} that there exists $y^{*}$ in the region
$(\frac{q}{a},\infty)$ such that $\tilde{g}(y^*)=0$. Noticing that
\begin{eqnarray*}
\tilde{g}^{''}(y)
&=&(\lambda_{1}-1)(\lambda_{1}+1-\lambda_{2})
(\lambda_{1}-\lambda_{2})y^{\lambda_{1}-\lambda_{2}-1}\\
&&-\frac{q}{a}\lambda_{1}(\lambda_{1}-\lambda_{2})
(\lambda_{1}-\lambda_{2}-1)y^{\lambda_{1}-\lambda_{2}-2}\\
&=&(\lambda_{1}-\lambda_{2})\lambda_{1}(\lambda_{1}-1)y^{\lambda_{1}-2}h(y),\forall
y\geq\frac{q}{a},
\end{eqnarray*}
where $h(y)$ is
\begin{eqnarray}\label{Adx1}
h(y)=\frac{\lambda_{1}+1-\lambda_{2}}{\lambda_{1}}y^{1-\lambda_{2}}
-\frac{q}{a}\frac{\lambda_{1}-1-\lambda_{2}}{\lambda_{1}-1}y^{-\lambda_{2}}\geq
0,\forall y\geq\frac{q}{a},
\end{eqnarray}
it suffices to show that $\tilde{g}^{''}(y)\geq0,y\geq\frac{q}{a}$
for the uniqueness of $y^*$. For this we only need to prove
$h(y)\geq 0, \ \forall y\geq \frac{q}{a}$.  Since
\begin{eqnarray*}
h^{'}(y)=\frac{\lambda_{1}+1-\lambda_{2}}
{\lambda_{1}}(1-\lambda_{2})y^{-\lambda_{2}}
+\frac{q}{a}\frac{\lambda_{1}-1-\lambda_{2}}
{\lambda_{1}-1}\lambda_{2}y^{-\lambda_{2}-1},
\end{eqnarray*}
we  prove (\ref{Adx1})  in  following three cases. \vskip10pt
 \noindent
{\bf Case of $\delta>0,\gamma>r$}. \  In this case we have
$\lambda_{1}>1>\lambda_{2}>0$.
 If $\lambda_{1}-\lambda_{2}\geq 1$, then
$h^{'}(y)\geq0,\ y\geq\frac{q}{a}$. So
\begin{eqnarray*}
h(y)\geq h(\frac{q}{a})>0,y\geq\frac{q}{a}.
\end{eqnarray*}
If $\lambda_{1}-\lambda_{2}<1$, then
\begin{eqnarray*}
h(y)>\frac{\lambda_{1}+1-\lambda_{2}}{\lambda_{1}}y^{1-\lambda_{2}}
\geq\frac{\lambda_{1}+1-\lambda_{2}}{\lambda_{1}}
(\frac{q}{a})^{1-\lambda_{2}}>1,\ y\geq\frac{q}{a}.
\end{eqnarray*}
Therefore (\ref{Adx1}) implies the convexity of $\tilde{g}(y)$.
\vskip10pt
 \noindent
{\bf Case of $\delta>0,\gamma=r$}.\  \   In this case we have
$\lambda_{1}>1>\lambda_{2}=0$ and
\begin{eqnarray*}
h(y)=\frac{\lambda_{1}+1-\lambda_{2}}{\lambda_{1}}y^{1-\lambda_{2}}
-\frac{q}{a}\frac{\lambda_{1}-1-\lambda_{2}}{\lambda_{1}-1}\geq
h(\frac{q}{a})>0,\ y\geq \frac{q}{a}.
\end{eqnarray*}
Obviously, the convexity of $\tilde{g}(y)$ holds. \vskip10pt
 \noindent
{\bf Case of} $\delta>0,\gamma<r$ and $\gamma-r+\delta\geq0$.
 \ \   In this case  we have  $\lambda_{1}>1>0>\lambda_{2}$ and
\begin{eqnarray}  \label{Adx2}
h^{'}(y)&=&\frac{\lambda_{1}+1-\lambda_{2}}
{\lambda_{1}}(1-\lambda_{2})y^{-\lambda_{2}}
+\frac{q}{a}\frac{\lambda_{1}-1-\lambda_{2}}
{\lambda_{1}-1}\lambda_{2}y^{-\lambda_{2}-1}\nonumber\\
&>&\frac{q}{a}y^{-\lambda_{2}-1}(1-\lambda_{2})
(\frac{\lambda_{1}+1-\lambda_{2}}{\lambda_{1}}
-\frac{\lambda_{1}-1-\lambda_{2}}{\lambda_{1}-1}),\
y\geq\frac{q}{a},
\end{eqnarray}
where the last inequality follows from
$\lambda_{1}>1>0>\lambda_{2}>-(1-\lambda_{2})$.

Since $\gamma-r+\delta\geq0$, by (\ref{solve_lumda}) we have
\begin{eqnarray*}
\lambda_{1}+\lambda_{2}=2\frac{\gamma-r+\delta}{\sigma^{2}}+1\geq1.
\end{eqnarray*}
Because
\begin{eqnarray*}
\frac{\lambda_{1}+1-\lambda_{2}}
{\lambda_{1}}-\frac{\lambda_{1}-1-\lambda_{2}}{\lambda_{1}-1}\geq0,
\end{eqnarray*} by (\ref{Adx2}),
 $h^{'}(y)\geq0,\ y>\frac{q}{a}$ and
\begin{eqnarray*}
h(y)\geq h(\frac{q}{a})=(\frac{q}{a})^{-\lambda_{2}}
(\frac{\lambda_{1}+1-\lambda_{2}}{\lambda_{1}}-
\frac{\lambda_{1}-1-\lambda_{2}}{\lambda_{1}-1})\geq0.
\end{eqnarray*}
The convexity holds. Thus we complete the proof.
\end{proof}
 \vskip 15pt\noindent  {\bf Acknowledgements.}    We
are  very grateful to Professor Jianming Xia for his conversation
with us  and providing  original paper of \cite{stock} for us. We
also thank Professor Yongqing Xu for being informed us their work
\cite{Gy}. Special thanks also go to the participants of the seminar
stochastic analysis and finance at Tsinghua University for their
feedbacks and useful conversations. This work is supported by
Projects 10771114 and 11071136 of NSFC, Project 20060003001 of
SRFDP, and SRF for ROCS, SEM, and the Korea Foundation for Advanced
Studies. We would like to thank the institutions for the generous
financial support.
 \vskip 15pt \noindent
\setcounter{equation}{0}

\end{document}